\newtheorem{thm}{Theorem}
\newtheorem{thmm}{Theorem}
\newtheorem{defnn}[thmm]{Definition}
\newtheorem{lemma}{Lemma}
\newtheorem*{lemma*}{Lemma}
\theoremstyle{definition}
\newcommand{\ensm}[1]{\ensuremath{#1}}
\newcommand{\bra}[1]{\ensm{\langle #1|}}
\newcommand{\ket}[1]{\ensm{| #1 \rangle}}
\newcommand{\be}{\begin{equation}}
\newcommand{\ee}{\end{equation}}
\newcommand{\bea}{\begin{eqnarray}}
\newcommand{\eea}{\end{eqnarray}}
\newcommand{\proj}[1]{\ket{#1}\!\bra{#1}}
\newcommand{\ot}[0]{\otimes}
\newcommand{\Tr}[1]{\mathrm{Tr}#1}
\newcommand{\tr}[1]{\mathrm{tr}#1}
\title{Entanglement and Nonlocality in Many-Body Systems: a primer}
\author{J. Tura, A.B. Sainz, T. Gra\ss,
R. Augusiak}
\institute{ICFO -- Institut de Ci\`encies Fot\`oniques, Av. C.F. Gauss 3, Castelldefels, 08860 Spain}
\author{A. Ac\'in \atque M.~Lewenstein}
\institute{ICFO -- Institut de Ci\`encies Fot\`oniques, Av. C.F. Gauss 3, Castelldefels, 08860 Spain}
\institute{ICREA -- Instituci\'o Catalana de Recerca i Estudis Avan\c cats, Lluis Campanys 3,  Barcelona, 08010 Spain}
\begin{document}

\maketitle

\begin{abstract}
Current understanding of correlations and quantum phase
transitions in many-body systems has significantly improved thanks
to the recent intensive studies of their entanglement properties.
In contrast, much less is known about the role of quantum
non-locality in these systems. On the one hand, standard,
"theorist- and experimentalist-friendly" many-body observables
involve correlations among only few (one, two, rarely three...)
particles. On the other hand, most of the available multipartite
Bell inequalities involve correlations among many particles. Such
correlations are notoriously hard to access theoretically, and
even harder experimentally. Typically, there is no Bell inequality
for many-body systems built only from low-order correlation
functions. Recently, however, it has been shown in
[J. Tura \textit{et al.}, Science {\bf 344}, 1256 (2014)]
that multipartite Bell inequalities constructed only from two-body correlation functions are strong enough to reveal
non-locality in some many-body states, in particular those relevant for
nuclear and atomic physics. The purpose of this lecture is to provide an overview of the
problem of quantum correlations in many-body systems -- from
entanglement to nonlocality -- and the methods for their
characterization.
\end{abstract}

\section{Introduction}

Nonlocality is a property of correlations that goes beyond the
paradigm of local realism
\cite{EPR,Bohmbook,Bell64,Bellbook,NLreview,Bellissue}. According
to the celebrated theorem by J.S. Bell \cite{Bell64}, correlations
among the results of measurements of local observables performed
on some entangled states do not admit a {\it local
hidden-variable} (LHV) model (cf. \cite{RAMDAA} for a review on
LHV models). In other words, these correlations cannot be
described by observers who have access only to correlated
classical variables. In such instances, the observed quantum
correlations are named nonlocal and we talk about quantum
nonlocality, or Bell nonlocality. This can be detected by means of
the so-called Bell inequalities \cite{Bell64} - the celebrated
example of such is the famous Clauser-Horne-Shimony-Holt
inequality (CHSH) \cite{CHSH}. In general, Bell inequalities are
inequalities formulated in terms of linear combinations of the
probabilities observed when performing the local measurements on
composite systems, and their violation signals nonlocality.
Quantum, or Bell nonlocality is interesting for at least  three
reasons:
\begin{itemize}

\item It is a resource for quantum communication,  secure key
distribution~\cite{key1,key2,key3}, or certified quantum
randomness generation~\cite{rand1,rand2,rand3}. Hence, it is one
of the most important elements of the
 future quantum technologies.

\item It lies at the heart of philosophical aspects of quantum physics \cite{ZurekWheeler,Bellbook,Bohmbook},
leading frequently to controversial interpretations (see for
instance \cite{Heisenberg-free,FreeWillbook}, or the more recent
works \cite{Eberly1,Eberly2}).

\item Its characterization is a challenging complex and  difficult problem, proved to be, depending  on formulation,
 NP-complete or NP-hard
(\cite{Babai,Avis}; see also \cite{NLreview,Bellissue} and
references therein).
\end{itemize}

Quantum-mechanical states that violate Bell inequalities are
necessarily entangled and cannot be represented as mixtures of
projections on simple product states \cite{Werner1989} (for a
review on entanglement see \cite{Horodeccy}); the opposite does
not have to be true. Already in 1991 Gisin proved \cite{Gisin1991}
that any pure state of two parties violates a Bell's inequality.
This result was extended to an arbitrary number of parties by Popescu
and Rohlich \cite{PR}. But, Werner in the seminal paper from 1989
\cite{Werner1989} constructed examples of mixed bipartite states
that admit a LHV model for local projective measurements, and
nevertheless are entangled. This result was then generalized by
Barrett to arbitrary generalized measurements \cite{Barrett}. Very
recently, it has been shown that entanglement and nonlocality are
inequivalent for any number of parties (\cite{Demian2014} and
references therein).

On the other hand, entanglement, despite being a weaker property
of quantum states than nonlocality, has proven to be very useful
to characterize properties of many-body systems, and the nature of
quantum phase transitions (QPT) \cite{Sachdev}. For instance,
focusing on lattice spin models described by local
Hamiltonians, the following properties are true (for
a review see \cite{Augusiak2010,Lewenstein2012}):
\begin{itemize}

\item The reduced density matrix for two spins typically exhibits  entanglement
for  short separations of the spins only, even at criticality; still
entanglement measures show signatures of QPTs
\cite{Amico,Osborne2};

\item By performing optimized measurements on
the rest of the system, one can concentrate the entanglement in the
chosen two spins. One obtains in this way {\it localizable
entanglement} \cite{Frank1,Frank2}, whose entanglement length
diverges when the standard correlation length diverges, i.e., at
standard QPTs;

\item For non-critical systems, ground states (GSs) and low energy states exhibit the, so-called, area laws: the
von Neuman (or R\'enyi) entropy of the reduced density matrix of a
block of size $R$ scales as the size of the boundary of the block,
$\partial R$; at criticality logarithmic divergence occurs
frequently \cite{VidalLatorre} (for a review see
\cite{CardyJPA,EisertRMP}). These results are very well
established in 1D, while there are plenty of open questions in 2D
and higher dimensions;
\item GSs and low energy states can be efficiently described by the, so called, matrix product states, or more generally
tensor network states
(cf. \cite{Frank/Cirac2});

\item Topological order (at least for gapped systems in 1D and 2D) exhibits itself in the properties of the,
so called, {\it entanglement spectrum}, i.e. the spectrum of the
logarithm of the reduced density matrix of a block $R$
\cite{Haldane}, and in 2D in the appearance of the, so called,
{\it topological entropy}, i.e. negative constant correction to
the area laws \cite{Preskill,Wen}.
\end{itemize}

A natural question thus arises: Does non-locality play also an
important role in characterization of correlations in many-body
systems? Apart from its fundamental interest, so far the role of
nonlocality in such systems has hardly been explored. As already
mentioned, entanglement and nonlocality are known to be
inequivalent quantum resources. In principle, a generic many-body
state, say a ground state of a local Hamiltonian, is pure,
entangled and, because all pure entangled states violate a Bell
inequality~\cite{PR}, it is also nonlocal. However, this result
is hardly verifiable in experiments, because the known Bell
inequalities (see, e.g., \cite{nier1,nier2,Sliwa,JD1,nier4})
usually involve products of observables of all parties. Unfortunately,
measurements of such observables, although in principle possible
\cite{Greiner-single,Bloch-single}, are technically extremely
difficult; instead one has typically "easy" access to few-body
correlations, say one- and  two-body, in generic many-body
systems. Thus, the physically relevant question concerning the
nonlocality of many-body quantum states is whether its detection
is possible using only two-body correlations.

The plan of these lectures is the following: In Section \ref{Crash} we
present a crash course in entanglement theory, and talk about
bipartite pure and mixed states, about entanglement criteria, and
entanglement measures. Section \ref{ManyBody} is devoted to the
discussion of some aspects of entanglement in many-body systems.
There we talk about the computational complexity of many-body
problems, and relate it to entanglement of a generic state. We then
explain area laws, and indicate why they give us hopes
to find new efficient ways of solving many-body problems with new
numerical tools. These new tools are provided by the tensor
network states. Section \ref{Area} introduces the problem of
non-locality in many-body systems; we use here the contemporary
approach called {\it device-independent quantum information
theory} that talks about properties of correlations between
measurements only. Here we introduce the concept of classical
correlations, quantum-mechanical correlations, and non-signalling
correlations. CHSH inequality and its violations are shortly
presented here. In Section \ref{Nonlocality} we enter into the
problem of nonlocality detection in many-body systems based on
Bell inequalities that involve only two- and one-body correlators.
Here we explain the idea of permutationally invariant Bell
inequalities.  Finally, Section \ref{Conclusions} discusses
physical realizations of many-body non-locality  with ionic and
atomic models. These are promising systems in which the quantum
violation of our Bell inequalities could be observed. 

%Appendix
%\ref{Appendix} contains a discussion of various situations in
%which correlations obtained in classical physics do not obey EPR
%assumptions, and as such may violate Bell inequalities.

\section{Crash course on entanglement}
\label{Crash}

In this section, we focus on bipartite composite systems and
follow the presentation of Ref. \cite{Augusiak2010}. We will
define formally what entangled states are, and present one
important criterion to discriminate entangled states from
separable ones. However, before going into details, let us
introduce the notation. In what follows we will be mostly
concerned with bipartite scenarios, in which traditionally the
main roles are played by two parties called Alice and Bob. Let
\(\mathcal{H}_A\) denote the Hilbert space of Alice's physical
system, and \(\mathcal{H}_B\) that of Bob's. Our considerations
will be restricted to finite-dimensional Hilbert spaces, so we can
set $\mathcal{H}_{A}=\mathbbm{C}^{m}$ and
$\mathcal{H}_{B}=\mathbbm{C}^{n}$. Thus, the joint physical system
of Alice and Bob is described by the tensor product Hilbert space
\(\mathcal{H}_{AB}=\mathcal{H}_A \otimes
\mathcal{H}_{B}=\mathbbm{C}^{m}\ot\mathbbm{C}^{n}\). Finally,
$\mathcal{B}(\mathcal{H})$ will denote the set of bounded linear
operators from the Hilbert space $\mathcal{H}$ to $\mathcal{H}$.

\subsection{Bipartite pure states: Schmidt decomposition}
\label{subsec:2}

We begin our considerations from pure states for which it is much easier to
introduce the concept of entanglement.

%We start our study with pure states, for which the concepts are simpler.
%Pure states are either separable or entangled states according to the following definition:

\begin{defnn}\label{SepBipPureStates}
We call a pure state $|\psi_{AB}\rangle\in\mathcal{H}_{AB}$ \textit{separable} if there exist
pure states $|\psi_{A}\rangle\in\mathcal{H}_{A}$ and
$|\psi_{B}\rangle\in\mathcal{H}_{B}$ such that
$|\psi_{AB}\rangle=|\psi_{A}\rangle\otimes |\psi_{B}\rangle$.
Otherwise, $\ket{\psi_{AB}}$ is called \textit{entangled}.
\end{defnn}

%\noindent{\bf Definition 1}\label{SepBipPureStates}
%Consider a pure state $|\psi_{AB}\rangle$ from $\mathcal{H}_A
%\otimes \mathcal{H}_{B}$. It is called separable if there exist
%pure states $|\psi_{A}\rangle\in\mathcal{H}_{A}$ and
%$|\psi_{B}\rangle\in\mathcal{H}_{B}$ such that
%$|\psi_{AB}\rangle=|\psi_{A}\rangle\otimes |\psi_{B}\rangle$.
%Otherwise we say that $\ket{\psi_{AB}}$ is entangled.
%\end{definition}

To give an illustrative example of an entangled state from $\mathcal{H}_{AB}$
let us consider the
%
%The most famous examples of entangled states in $\mathcal{H}_{AB}$
%are the
%
\textit{maximally entangled states}:
\begin{equation}\label{MaximallyEntState}
\ket{\psi_{+}^{(d)}}=\frac{1}{\sqrt{d}}\sum_{i=0}^{d-1}\ket{i}_{A}\ot \ket{i}_{B},
\end{equation}
where $d=\min\{m,n\}$ and $\{\ket{i}_{A}\}$ and $\{\ket{i}_{B}\}$ are some
orthonormal bases (for instance the standard ones) in
$\mathcal{H}_{A}$ and $\mathcal{H}_{B}$, respectively. The reason why this state
is called maximally entangled will become clear when we introduce
entanglement measures.

For pure states, the  {\it separability problem} ---
the task of judging if a given quantum state is separable --- is easy
to handle using the concept of
Schmidt decomposition which we introduce in the following theorem.
\begin{thm}\label{SchmidtDec}
Every pure state $\ket{\psi_{AB}}\in \mathcal{H}_{AB}$ with $m\leq
n$ admits the following decomposition
%Schmidt decomposition
%
\begin{equation}\label{Schmidt0}
|\psi_{AB}\rangle = \sum _{i=1}^{r} \lambda_i |e_i\rangle \otimes
|f_i\rangle
\end{equation}
called also the Schmidt decomposition,
where the local vectors $|e_i\rangle$ and $|f_i\rangle$ form parts of
orthonormal bases in $\mathcal{H}_{A}$ and $\mathcal{H}_{B}$,
respectively. Then, $\lambda_i$ are some positive numbers that satisfy $\sum_{i=1}^{r} \lambda_i^2=1$, and
$r\leq m$.
\end{thm}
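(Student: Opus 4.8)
The plan is to reduce the claim to the singular value decomposition (SVD) of the matrix of coefficients of $\ket{\psi_{AB}}$ in arbitrary product bases. First I would fix orthonormal bases $\{\ket{j}_A\}_{j=1}^{m}$ of $\mathcal{H}_A$ and $\{\ket{k}_B\}_{k=1}^{n}$ of $\mathcal{H}_B$ and expand $\ket{\psi_{AB}}=\sum_{j=1}^{m}\sum_{k=1}^{n}c_{jk}\,\ket{j}_A\ot\ket{k}_B$, so that the state is completely encoded in the $m\times n$ complex matrix $C=(c_{jk})$. By the SVD there exist an $m\times m$ unitary $U$, an $n\times n$ unitary $V$, and reals $\lambda_1\geq\cdots\geq\lambda_r>0$ with $r=\mathrm{rank}\,C\leq m$, such that $c_{jk}=\sum_{i=1}^{r}U_{ji}\,\lambda_i\,\overline{V_{ki}}$.

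Next I would define the local vectors $\ket{e_i}=\sum_{j=1}^{m}U_{ji}\ket{j}_A$ and $\ket{f_i}=\sum_{k=1}^{n}\overline{V_{ki}}\ket{k}_B$ for $i=1,\dots,r$. Since $U$ and $V$ are unitary, $\langle e_i|e_{i'}\rangle=\sum_{j}\overline{U_{ji}}\,U_{ji'}=\delta_{ii'}$ and likewise $\langle f_i|f_{i'}\rangle=\delta_{ii'}$, so each family is orthonormal and (because $r\leq m\leq n$) extends to an orthonormal basis of the respective space — this is the sense in which the $\ket{e_i},\ket{f_i}$ ``form parts of orthonormal bases.'' Substituting the SVD formula for $c_{jk}$ into the expansion of $\ket{\psi_{AB}}$ and carrying out the sums over $j$ and $k$ reproduces exactly the decomposition (\ref{Schmidt0}). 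Positivity of the $\lambda_i$ and the bound $r\leq m$ are then immediate, and normalization fixes the remaining constraint: using the orthonormality just established, $1=\langle\psi_{AB}|\psi_{AB}\rangle=\sum_{i,i'}\lambda_i\lambda_{i'}\langle e_i|e_{i'}\rangle\langle f_i|f_{i'}\rangle=\sum_{i=1}^{r}\lambda_i^{2}$.

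An equivalent, more physical route avoids quoting the SVD: diagonalize the reduced density matrix $\rho_A=\mathrm{Tr}_B\,\ket{\psi_{AB}}\!\bra{\psi_{AB}}$, which is positive semidefinite, as $\rho_A=\sum_{i=1}^{r}\lambda_i^{2}\,\ket{e_i}\!\bra{e_i}$ with $\lambda_i>0$; then set $\ket{f_i}=\lambda_i^{-1}(\bra{e_i}\ot\mathbbm{1})\ket{\psi_{AB}}$ and verify that $\sum_{i=1}^r\lambda_i\ket{e_i}\ot\ket{f_i}$ equals $\ket{\psi_{AB}}$. The one genuinely non-routine point in either approach is the orthonormality of the $\ket{f_i}$: in the SVD picture it is automatic from unitarity of $V$, whereas in the density-matrix picture it follows only after invoking the eigenvalue equation $\rho_A\ket{e_i}=\lambda_i^{2}\ket{e_i}$ together with $\langle e_i|e_{i'}\rangle=\delta_{ii'}$; everything else is bookkeeping. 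Although the statement does not assert it, the same computation shows that the multiset $\{\lambda_i^{2}\}$ is unique, being the nonzero spectrum of $\rho_A$, and that it coincides with the one obtained from $\rho_B=\mathrm{Tr}_A\,\ket{\psi_{AB}}\!\bra{\psi_{AB}}$.
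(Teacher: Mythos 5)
Your proposal is correct and follows essentially the same route as the paper, which states only that the proof ``employs the singular value decomposition of the matrix describing the coefficients one gets by expanding the state in arbitrary orthonormal bases''; you have simply carried out that sketch in full detail. Your closing observation that $\{\lambda_i^2,|e_i\rangle\}$ and $\{\lambda_i^2,|f_i\rangle\}$ are the eigensystems of the reduced density matrices is likewise exactly the remark the paper makes immediately after the theorem, so the density-matrix variant you mention is also consistent with the text.
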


%\noindent{\bf Theorem 1}
%\index{Schmidt decomposition}
%
%Let \(\ket{\psi_{AB}}\in
%\mathcal{H}_{AB}=\mathcal{C}^{d_{A}}\ot\mathcal{C}^{d_{B}}\) with
%$d_{A}\leq d_{B}$. Then $\ket{\psi_{AB}}$ can be written as a
%Schmidt decomposition
%
%\begin{equation}\label{Schmidt0}
%|\psi_{AB}\rangle = \sum _{i=1}^{r} \lambda_i |e_i\rangle \otimes
%|f_i\rangle,
%\end{equation}
%
%where $|e_i\rangle$ and $|f_i\rangle$ form a part of an
%orthonormal basis in $\mathcal{H}_{A}$ and $\mathcal{H}_{B}$,
%respectively, $\lambda_i > 0$, $\sum_{i=1}^{r} \lambda_i^2=1$, and
%$r\leq d_{A}$.
%\end{theorem}

The proof of the Theorem \ref{SchmidtDec} employs the singular value
decomposition of the matrix describing the coefficients one gets
by expanding the state in arbitrary orthonormal bases from Alice's
and Bob's Hilbert spaces. The numbers $\lambda_{i}>0$
$(i=1,\ldots,r)$ and $r$ are called, respectively, \textit{the
Schmidt coefficients} and \textit{the Schmidt rank} of
$\ket{\psi_{AB}}$. It is also worth noticing that
$\{\lambda_{i}^{2},\ket{e_{i}}\}$ and
$\{\lambda_{i}^{2},\ket{f_{i}}\}$ are eigensystems of the density
matrices representing the first and second subsystem of
$\ket{\psi_{AB}}$ and $r$ is their rank.

Now, one immediately realizes that Theorem \ref{SchmidtDec} provides a very simple
separability criterion for bipartite pure states: a state $\ket{\psi_{AB}}$
is separable if, and only if its Schmidt rank is one.
Moreover, this criterion is operational, i.e., to check if a given pure state is separable, it suffices
to determine the rank $r$ of one of its subsystems: if $r=1$ (the corresponding subsystem is in a pure
state) then $\ket{\psi_{AB}}$ is separable; otherwise it is
entangled. Note that the maximally entangled state
(\ref{MaximallyEntState}) is already written in the form
(\ref{Schmidt0}), with $r=d$ and all the Schmidt coefficients equal to
$1/\sqrt{d}$.

%\begin{figure}[t!]
%\begin{center}
%\includegraphics[width=\textwidth]{fr.pdf}
%\end{center}
%\caption{Description of the figure.}
%\end{figure}

\subsection{Bipartite mixed states: Separable and entangled states}

%Let us now pass to the case of mixed states for which the separability problem
%complicates considerably. In order to introduce the notion of separable
%mixed states --- first formalized by Werner \cite{Werner1989} ---

Let us now pass to the case of mixed states. Having learned the definition of
separability for pure states, one could naively expect that mixed separable states
are those taking the product form $\rho_A\otimes\rho_B$. This intuition is, however,
not entirely correct and one can argue that all convex combinations of such product states should also
be called separable. This is why the separability problem for mixes states
complicates considerably.

%It turns our, however,
%that the definition of mixed separable states covers also all possible convex combination of
%such product states, which is why the separability problem for mixed states complicates considerably.

In order to recall the definition of mixed separable states --- first formalized by Werner in 1989
\cite{Werner1989} --- in more precise terms let us consider the following state preparation procedure.
%
%In order to understand the distinction between separable and entangled
%mixed states  let us consider the following state
%preparation procedure.
%
Imagine that in their distant laboratories, Alice and Bob can produce and manipulate any physical system. Apart from that they can
also communicate using a classical channel (for instance a phone line), however, they
are not allowed to communicate quantumly, meaning that Alice is not allowed to send any
quantum particle to Bob and \textit{vice versa}. These two capabilities, i.e.,
\textit{local operations (LO)} and \textit{classical communication (CC)}, are frequently referred to as LOCC.

Now, let us suppose that in their local laboratories Alice and Bob
can prepare one of $K$ different states $\ket{e_i}\in\mathcal{H}_A$ and
$\ket{f_i}\in\mathcal{H}_B$ $(i=1,\ldots,K)$, respectively. Let us
then assume that in each round of the preparation scheme, Alice
generates with probability $p_{k}$ an integer $k$
$(k=1,\ldots,K)$, which she later sends to Bob using the classical
channel they share. Upon receiving $k$, Alice and Bob use their
local devices to prepare the states $\ket{e_k}$ and $\ket{f_k}$,
respectively. The state that Alice and Bob share after repeating
the above procedure many times is of the form
%
%The most general state that Alice and Bob can generate in this way is
%given by
%
%Depending on this number,in each round Alice prepares a pure state $\ket{e_{i}}$,
%and Bob a state $\ket{f_{i}}$.
%After many rounds,
%the result of this preparation scheme is of the form
%
\begin{eqnarray}
\label{eqn:sepmixed} \varrho_{AB} = \sum_{i=1}^K  p_i
|e_i\rangle\! \langle e_i| \otimes |f_i\rangle\! \langle f_i|,
\end{eqnarray}
which is the aforementioned convex combination of product states.
This is also the most general state that
can be prepared by means of LOCC provided that initially no other
quantum state was shared by Alice and Bob. This gives us the formal
definition of separability \cite{Werner1989}.

\begin{defnn}\label{SepMixed}
A mixed state $\varrho_{AB}$ acting on
$\mathcal{H}_{AB}$ is called \textit{separable} if, and only
if it admits the decomposition (\ref{eqn:sepmixed}).
Otherwise, it is called \textit{entangled}.
\end{defnn}

%\noindent{\bf Definition 2}\label{SepMixed}
%We say that a mixed state $\varrho_{AB}$ acting on
%$\mathcal{H}_{AB}$ is separable if and only
%if it can be represented as a convex combination of the product of
%projectors on local states as in Eq.\ (\ref{eqn:sepmixed}).
%Otherwise, the mixed state is said to be entangled.

It then follows from this definition that entangled states cannot be prepared
locally by two parties even if they are allowed to communicate over a classical
channel. To prepare entangled states the physical systems must be
brought together to interact\footnote{Due to entanglement swapping \cite{Zukowski93}, one must
suitably enlarge the notion of preparation of entangled states.
So, an entangled state between two particles can be prepared if
and only if either the two particles (call them $A$ and $B$)
themselves come together to interact at a time in the past, or two
\textit{other} particles (call them $C$ and $D$) do the same, with $C$
having interacted beforehand with A and $D$ with $B$.}. Mathematically, a
non-product unitary operator (i.e., not of the form $U_A\otimes U_B$)
%
%\footnote{A unitary operator on
%\(\mathcal{H}_{A}\otimes \mathcal{H}_{B}\) is said to be
%``nonlocal'' if it is not of the form \(U_A \otimes U_B\), where
%\(U_A\) is a unitary operator acting on \(\mathcal{H}_{A}\) and \(U_B\) acts on \(\mathcal{H}_{B}\).}
%
must \textit{necessarily} act on the physical system
to produce an entangled state from an initial separable one.

Let us recall that the number of pure separable states $K$ necessary to decompose any
separable state into a convex combination of pure product states according to Eq.\ (\ref{eqn:sepmixed})
is limited by the Carath\'eodory theorem as \(K \leq (nm )^2\) (see
\cite{Horodeccy,Horodecki97}). No better bound is known in general, however,
for two-qubit ($\mathcal{H}_{AB}=\mathbbm{C}^2\ot\mathbbm{C}^2$) and qubit-qutrit ($\mathcal{H}_{AB}=\mathbbm{C}^2\ot \mathbbm{C}^3$) systems
it was shown that $K\leq 4$ \cite{a} and
$K\leq 6$ \cite{b}, respectively.
%No better bound is known in general.

The question whether a given bipartite state is separable or not
turns out to be very complicated (see, e.g., Refs. \cite{Horodeccy,GuhneTothReview}).
Although the general answer to
the separability problem still eludes us, there has been
significant progress in recent years, and we will review some such
directions in the following paragraphs.

\subsection{Entanglement criteria}
\index{Entanglement criteria}

%Although there exist an necessary and sufficient criterion for
%separability , the operational one is still missing.

An operational necessary and sufficient criterion for detecting
entanglement still does not exist (see, nevertheless, Ref. \cite{Horodecki96} for a non-operational one). However, over the years the
whole variety of sufficient criteria allowing for detection of entanglement
has been worked out. Below we review one of them, while for others the reader is referred to Ref. \cite{GuhneTothReview}. Note that, even if such an operation necessary and sufficient condition is missing, there are
numerical checks of separability: one can test separability of a state using, for instance, semi-definite programming~\cite{Doherty02,Hulpke05}.
In general --- without a restriction on dimensions --- the
separability problem belongs to the NP-hard class of computational
complexity \cite{Gurvits03}.

{\it Partial transposition} is an easy--to--apply necessary criterion based on
the transposition map first recognized by Choi \cite{Choi82} and then
independently formulated in the separability context by
Peres \cite{Peres96}.

\begin{defnn}\index{Partial transposition}
Let \(\varrho_{AB}\) be a state acting on
\({\cal H}_{AB}\) and let
$T:\mathcal{B}(\mathbbm{C}^{d})\to\mathcal{B}(\mathbbm{C}^{d})$ be the
transposition map with respect to some real basis $\{\ket{i}\}$ in
$\mathbbm{C}^{d}$ defined through $T(X)\equiv
X^{T}=\sum_{i,j}x_{ij}\ket{j}\!\bra{i}$ for any
$X=\sum_{i,j}x_{ij}\ket{i}\!\bra{j}$ from
$\mathcal{B}(\mathbbm{C}^{d})$. Let us now consider an extended map
$T\ot I_{B} $ called hereafter {\it partial transposition} with
$I_{B}$ being the identity map acting on the second subsystem. When
applied to $\varrho_{AB}$, the map $T\ot I_{B} $ transposes the
first subsystem leaving the second one untouched. More formally,
writing $\varrho_{AB}$ as
\begin{equation}
\varrho_{AB} = \sum_{i,j=1}^{m} \sum_{\mu,\nu=1}^{n}
\varrho_{ij}^{\mu\nu}|i\rangle\!\langle j|\otimes
|\mu\rangle\!\langle \nu|,
\end{equation}
where \(\{|i\rangle\}\) and \(\{|\mu\rangle\}\) are real bases
in Alice and Bob Hilbert spaces, respectively, we have
\begin{equation}
\label{eq_partial_trans} (T\ot I_{B}
)(\varrho_{AB})\equiv\varrho_{AB}^{T_A} = \sum_{i,j=1}^{m}
\sum_{\mu,\nu=1}^{n} \varrho_{ij}^{\mu\nu} |j\rangle\!\langle i|
\otimes |\mu\rangle\!\langle \nu|.
\end{equation}
\end{defnn}

%\noindent{\bf Definition 3}\index{Partial transposition}
%Let \(\varrho_{AB}\) be a state on the product Hilbert space
%\({\cal H}_{AB}\), and
%$T:\mathcal{B}(\mathcal{C}^{d})\to\mathcal{B}(\mathcal{C}^{d})$ a
%transposition map with respect to some real basis $\{\ket{i}\}$ in
%$\mathcal{C}^{d}$, defined through $T(X)\equiv
%X^{T}=\sum_{i,j}x_{ij}\ket{j}\!\bra{i}$ for any
%$X=\sum_{i,j}x_{ij}\ket{i}\!\bra{j}$ from
%$\mathcal{B}(\mathcal{C}^{d})$. Let us now consider an extended map
%$T\ot I_{B} $ called hereafter {\it partial transposition}, where
%$I_{B}$ is the identity map acting on the second subsystem. When
%applied to $\varrho_{AB}$, the map $T\ot I_{B} $ transposes the
%first subsystem leaving the second one untouched. More formally,
%writing $\varrho_{AB}$ as
%
%\begin{equation}
%\varrho_{AB} = \sum_{i,j=0}^{d_{A}-1} \sum_{\mu,\nu=0}^{d_{B}-1}
%\varrho_{ij}^{\mu\nu}|i\rangle\!\langle j|\otimes
%|\mu\rangle\!\langle \nu|,
%\end{equation}
%
%where \(\{|i\rangle\}\) and \(\{|\mu\rangle\}\) are real bases
%in Alice and Bob Hilbert spaces, respectively, we have
%
%\begin{equation}
%\label{eq_partial_trans} (T\ot I_{B}
%)(\varrho_{AB})\equiv\varrho_{AB}^{T_A} = \sum_{i,j=0}^{d_A-1}
%\sum_{\mu,\nu=0}^{d_B-1} \varrho_{ij}^{\mu\nu} |j\rangle\!\langle i|
%\otimes |\mu\rangle\!\langle \nu|.
%\end{equation}
%

In an analogous way one defines partial transposition with respect to
Bob's subsystem, denoted by $\varrho_{AB}^{T_{B}}$. Although
the partial transposition of \(\varrho_{AB}\) depends upon the
choice of the basis in which \(\varrho_{AB}\) is written, its
eigenvalues are basis independent. The applicability of the
transposition map in the separability problem can be formalized by
the following statement \cite{Peres96}.

\begin{thm}\label{PeresTh}
For every separable state \(\rho_{AB}\) acting on $\mathcal{H}_{AB}$,
\(\rho_{AB}^{T_{A}}\ \ge\ 0\) and \(\rho_{AB}^{T_{B}}\ \ge\ 0\).
\end{thm}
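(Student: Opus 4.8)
The plan is to use the explicit separable decomposition of Definition~\ref{SepMixed} together with the linearity of the partial transposition map, reducing the statement to the single fact that the transpose of a rank-one projector is again a positive operator. First I would recall that a separable $\varrho_{AB}$ can be written as $\varrho_{AB} = \sum_{i=1}^{K} p_i \proj{e_i} \ot \proj{f_i}$ with $p_i \geq 0$. Since $T \ot I_B$ is a linear map, applying it term by term yields $\varrho_{AB}^{T_A} = \sum_{i=1}^{K} p_i (\proj{e_i})^{T} \ot \proj{f_i}$.

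The key step is then to compute $(\proj{e_i})^T$. Writing $\ket{e_i} = \sum_k c_k^{(i)} \ket{k}$ in the real basis that fixes the transposition, one has $\proj{e_i} = \sum_{k,l} c_k^{(i)} \overline{c_l^{(i)}}\, \ket{k}\!\bra{l}$, so by the definition~(\ref{eq_partial_trans}) of $T$ and a relabelling of the summation indices, $(\proj{e_i})^T = \sum_{k,l} c_k^{(i)} \overline{c_l^{(i)}}\, \ket{l}\!\bra{k} = \proj{e_i^*}$, where $\ket{e_i^*} := \sum_k \overline{c_k^{(i)}}\, \ket{k}$ is the complex conjugate of $\ket{e_i}$ in that basis. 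In particular $(\proj{e_i})^T \geq 0$, being (a nonnegative multiple of) a rank-one projector again.

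Putting this together, $\varrho_{AB}^{T_A} = \sum_{i=1}^{K} p_i \proj{e_i^*} \ot \proj{f_i}$ is a nonnegative linear combination of tensor products of positive semidefinite operators; since tensor products of positive operators are positive and the cone of positive operators is closed under such combinations, $\varrho_{AB}^{T_A} \geq 0$ --- in fact it is itself a separable state of the same form. The argument for $\varrho_{AB}^{T_B}$ is identical, transposing Bob's factors instead. There is no real technical obstacle here; the only point worth emphasizing is conceptual, namely that the argument genuinely exploits the separable structure because transposition is a positive but \emph{not} completely positive map. For an entangled $\varrho_{AB}$ the partial transpose can --- and generically does --- have negative eigenvalues, which is precisely what makes Theorem~\ref{PeresTh} a nontrivial entanglement criterion rather than a vacuous identity.
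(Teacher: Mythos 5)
Your proof is correct and follows essentially the same route as the paper's: decompose the separable state, apply $T\ot I_B$ term by term, and observe that $(\proj{e_i})^{T}=\proj{e_i^*}$ so the result is again a separable (hence positive) state. The only cosmetic difference is that you verify the transpose of the projector by an explicit coordinate computation, whereas the paper invokes the identity $A^{\dagger}=(A^*)^{T}$.
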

\begin{proof}
It follows from Definition \ref{SepMixed} that
by applying the partial transposition with respect to the first subsystem to a separable
state $\rho_{AB}$, one obtains
%
% Since \(\varrho_{AB}\) is separable, according to Definition
%\ref{SepMixed} it has the form (\ref{eqn:sepmixed}). Then,
%performing the partial transposition with respect to the first
%subsystem, one obtains
%
\begin{eqnarray}\label{Proof}
\rho_{AB}^{T_{A}} = \sum_{i=1}^{K}\ p_i\left(\proj{e_i}
\right)^{T_{A}}\otimes \proj{f_i}
%&=& \sum_{i=1}^{K}\ p_i |e_i^* \rangle \langle e_i^* |\otimes |f_i \rangle \langle f_i |\nonumber\\
= \sum_{i=1}^{K}\ p_i \proj{e_i^*} \otimes \proj{f_i}, %\ge 0.
\end{eqnarray}
where the second equality follows from the fact that $A^{\dagger}=\left(A^*\right)^{T}$
for all $A$. From the above one infers that $\rho_{AB}^{T_{A}}$
is a proper (and in particular separable) state, meaning that
$\rho_{AB}^{T_{A}}\geq 0$. The same reasoning shows that
$\rho_{AB}^{T_{B}}\geq 0$, which completes the proof.
\end{proof}

%\noindent{\bf Theorem 2}\label{PeresTh}
%If  a state \(\rho_{AB}\) is separable, then
%\(\rho_{AB}^{T_{A}}\ \ge\ 0\) and \(\rho_{AB}^{T_{B}}\ \ge\ 0\).

%\noindent{\it Proof}
%Since \(\varrho_{AB}\) is separable, according to definition
%\ref{SepMixed} it has the form (\ref{eqn:sepmixed}). Then,
%performing the partial transposition with respect to the first
%subsystem, we have
%
%\begin{eqnarray}\label{Proof}
%\rho_{AB}^{T_{A}} = \sum_{i=1}^{K}\ p_i\left(\proj{e_i}
%\right)^{T_{A}}\otimes \proj{f_i}
%&=& \sum_{i=1}^{K}\ p_i |e_i^* \rangle \langle e_i^* |\otimes |f_i \rangle \langle f_i |\nonumber\\
%= \sum_{i=1}^{K}\ p_i \proj{e_i^*} \otimes \proj{f_i}. %\ge 0.
%\end{eqnarray}
%

%
Due to the identity
$\varrho_{AB}^{T_{B}}=(\varrho_{AB}^{T_{A}})^{T}$, and the fact
that global transposition does not change eigenvalues, partial
transpositions with respect to the $A$ and $B$ subsystems are
equivalent from the point of view of the separability problem.

In conclusion, we have a simple criterion, called \textit{partial
transposition criterion}, for detecting entanglement:
if the spectrum of one of the partial transpositions of
$\varrho_{AB}$ contains at least one negative eigenvalue then
$\varrho_{AB}$ is entangled. As an example, let us apply the
criterion to pure entangled states. If $\ket{\psi_{AB}}$ is
entangled, it can be written as (\ref{Schmidt0}) with $r>1$. Then,
the eigenvalues of $\proj{\psi_{AB}}^{T_{A}}$ are
$\lambda_{i}^{2}$ $(i=1,\ldots,r)$ and $\pm\lambda_{i}\lambda_{j}$
$(i\neq j\; i,j=1,\ldots,r)$. So, an entangled $\ket{\psi_{AB}}$
of Schmidt rank $r>1$ has partial transposition with $r(r-1)/2$
negative eigenvalues violating the criterion stated in Theorem
\ref{PeresTh}. Note that in systems of two qubits or a qubit and a
qutrit the partial transposition criterion provides the
necessary and sufficient condition for separability
\cite{Horodecki96}. This is no more true in higher dimensions, due
to the existence of entangled states with positive partial transposition
\cite{Horodecki97,Horodecki98}.%; they cannot be disilled, i.e.
%many copies of a bound enetagled state do not allow to "distill"
%maximally entnagled states via local operations and classiclal
%communication (LOCC).

\subsection{Entanglement measures}

Although the separability criterion discussed above allows one to check
whether a given state $\rho_{AB}$ is entangled,
it does not tell us (at least not directly) how much entanglement
it has. Such a quantification is necessary
because entanglement is a resource in quantum
information theory.
%
%The criterion discussed above allows one to check if a given state
%$\varrho_{AB}$ is entangled. However, in general it does not tell
%us directly {\em how much} $\varrho_{AB}$ is entangled.
%
There are several complementary ways to
quantify entanglement of bipartite quantum states (see
\cite{Bennett96b,Vedral97,DiVincenzo98,Laustsen03,Nielsen99,Vidal00,Jonathan99,Horodecki04,Horodecki01b,Plenio07,Horodeccy}
and references therein) and in what follows we briefly
discuss one of them.

Let us now introduce the definition of entanglement
measures (for a more detailed axiomatic description, and
other properties of entanglement measures, the reader is
encouraged to consult, e.g., \cite{Horodeccy,Horodecki01b,Plenio07}). The main ingredient in this definition
is the monotonicity under LOCC operations. More
precisely, if $\Lambda$ denotes some LOCC operation, and $E$ is our
candidate for the entanglement measure, $E$ has to satisfy
\begin{equation}\label{monot1}
E(\Lambda(\varrho))\leq E(\varrho),
\end{equation}
i.e., it should not increase under LOCC operations. Another requirement says that $E$ vanishes on separable states. At this point it is worth noticing that from the monotonicity under LOCC operations (\ref{monot1}) it already follows that $E$ is constant and minimal on separable states and also that it is invariant under unitary operations (see Ref.\ \cite{Horodeccy}).
%
%or
%
%\begin{equation}\label{monot2}
%\sum_{i}p_{i}E(\varrho_{i})\leq E(\varrho),
%\end{equation}
%
%where $\varrho_{i}$ are states resulting from the LOCC operation
%$\Lambda$ appearing with probabilities $p_{i}$ (as in the case of,
%e.g., projective measurements). Both requirements follow from the
%very intuitive condition that entanglement should not
%increase under the LOCC operations. It
%follows also that if $E$ is convex, then the condition
%(\ref{monot2}) implies (\ref{monot1}), but not vice versa ---
%therefore (\ref{monot2}) gives a stronger condition for
%monotonicity. Finally, notice that from
%the monotonicity under LOCC operations one also concludes that $E$
%is invariant under unitary operations, and gives a constant value
%on separable states (see e.g. Ref.\ \cite{Horodeccy}). \added{(Check the %last sentence)}

\subsection{Von Neumann entropy}

A ``good'' entanglement measure for a pure state
$|\psi_{AB}\rangle$ is the von Neumann entropy of the density matrix describing one of its subsystems, say
the first one which arises by tracing out Bob's subsystem of
$\ket{\psi_{AB}}$, i.e., $\varrho_A = {\rm Tr}_B\proj{\psi_{AB}}$. Recalling then that
the von Neumann entropy of a density matrix
$\rho$ is defined through $S(\rho) = -{\rm Tr}(\rho \log \rho)$, the following
quantity
%
%which arises by tracing out Bob's subsystem, i.e.,
%%\begin{equation}
% $    \varrho_A = {\rm Tr}_B\proj{\psi_{AB}}$. Then, the
%von Neumann entropy of $\rho_A$ defined as
%
\begin{equation}
     E(\ket{\psi_{AB}}) = S(\varrho_A) = S(\varrho_B)=-\sum_i\lambda_i^2\log\lambda_i^2,
\end{equation}
was shown to be an entanglement measure
\cite{Vidal00}. Notice that for the maximally entangled states
(\ref{MaximallyEntState}) one has $E(\ket{\psi_{+}^{(d)}}) =\log
d$. On the other hand, $E$ is an entanglement measure only for pure states. Separable mixed states have classical correlations, and thus the non-zero entropy of the reduced density matrix. In the following we will concentrate on the entanglement properties of the ground states of many-body
systems. There the von Neumann entropy of a density matrix
reduced to some region $R$ will play a fundamental role.

\section{Entanglement in Many-Body Systems}
\label{ManyBody}

\subsection{Computational complexity}

Let us start this discussion by considering simulations of quantum
systems with classical computers. What can be simulated
classically  \cite{Lewenstein2012}? The systems that can be
simulated classically are those to which we can apply efficient
numerical methods, such as the quantum Monte Carlo method that works,
for instance, very well for bosonic unfrustrated systems.
Sometimes we may apply systematic perturbation theory, or even use
exact diagonalization for small systems (say, for frustrated
antiferromagnets consisting of 30-40 spins 1/2). There is a
plethora of variational and related methods available, such as
various mean field methods, density functional theory (DFT),
dynamical mean field theory (DMFT), and methods employing tensor
network states  (TNS), such as Matrix-Product States (MPS),
Projected-Entangled-Pair States (PEPS), Multi-scale Entanglement
Renormalization Ansatz (MERA), etc.

What is then computationally hard? Generic examples include
fermionic models, frustrated systems, or disordered systems. While
MPS techniques allow for efficient calculation of the ground states
and also excited states in 1D, there are, even in 1D, no efficient algorithms
to describe the out-of-equilibrium quantum dynamics. Why do we
still have hopes to improve our classical simulation  skills in
the next future? This is connected with the recent developments of
the tensor network states and observation that most of the states
of physical interest, such as the ground states of local
Hamiltonians, are non generic and fulfill the, so called, area
laws.

\subsection{ Entanglement of a generic state}

Before we turn to the area laws for physically relevant states let
us first consider a {\it generic} pure state in  the Hilbert space
in $\mathbbm{C}^{m}\ot \mathbbm{C}^{n}$ ($m\leq n$). Such a generic
state (normalized) has the form
\begin{equation}
|\Psi\rangle=\sum_{i=1}^m\sum_{j=1}^n
\alpha_{ij}|i\rangle|j\rangle,
\end{equation}
where $\{\ket{i}\ket{j}\}$ is the standard basis in $\mathbbm{C}^{m}\ot \mathbbm{C}^{n}$ and the complex numbers $\alpha_{ij}$ may be regarded as random variables distributed uniformly on a hypersphere, i.e., distributed
according to the probability density 
\begin{equation}
P(\alpha)\propto \delta\left(\sum_{i=1}^m\sum_{j=1}^n
|\alpha_{ij}|^2-1\right), \label{distalpha}
\end{equation}
with the only constraint being the normalization. As we shall see,
such a generic state fulfills on average a ``volume" rather than
an area law. To this aim we introduce a somewhat more rigorous
description, and we prove that on average, the entropy of one of
subsystems of bipartite pure states in $\mathbbm{C}^{m}\ot
\mathbbm{C}^{n}$ ($m\leq n$) is almost maximal for sufficiently
large $n$. In other words, typical pure states in
$\mathbbm{C}^{m}\ot \mathbbm{C}^{n}$ are almost maximally entangled.
This ``typical behavior" of pure states happens to be completely
atypical for ground states of local Hamiltonians with an energy
gap between ground and first excited eigenstates. More precisely, one has the following theorem
(see, e.g., Refs. \cite{Lubkin,LloydPagels,Page,BengtssonBook,Foong,SenS,Sanchez-Ruiz}).

\begin{thm}\label{typical}
Let $\ket{\psi_{AB}}$ be a bipartite pure state from
$\mathbbm{C}^{m}\ot\mathbbm{C}^{n}$ $(m\leq n)$ drawn at random
according to the Haar measure on the unitary group and
$\varrho_{A}=\tr_{B}\proj{\psi_{AB}}$ be its subsystem acting on
$\mathbbm{C}^{m}$. Then,
\begin{equation}\label{EntrAppr0}
\langle S(\varrho_{A})\rangle \simeq \log m-\frac{m}{2n}.
\end{equation}
\end{thm}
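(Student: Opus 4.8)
The plan is to reduce the computation of $\langle S(\varrho_A)\rangle$ to that of the average \emph{purity} $\langle {\rm Tr}\,\varrho_A^2\rangle$, exploiting that a Haar-random state is, on average, very close to one whose reduction is maximally mixed. First I would realize the random state concretely: a state $\ket{\psi_{AB}}=U\ket{\psi_0}$ with $U$ Haar-distributed on the unitary group of $\mathbbm{C}^m\ot\mathbbm{C}^n$ is uniformly distributed on the unit sphere of $\mathbbm{C}^{mn}$, and such a vector can be written as $\ket{\psi_{AB}}=\|X\|^{-1}\sum_{i,j}X_{ij}\ket{i}\ket{j}$, where the $X_{ij}$ are i.i.d.\ standard complex Gaussians and $\|X\|^2={\rm Tr}(XX^\dagger)$; indeed, for a complex Gaussian vector the direction $X/\|X\|$ is uniform on the sphere and independent of the norm $\|X\|$. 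Then $\varrho_A=XX^\dagger/{\rm Tr}(XX^\dagger)$, and the independence of direction and norm gives
\begin{equation}\label{ratiosketch}
\langle {\rm Tr}\,\varrho_A^2\rangle=\frac{\langle {\rm Tr}(XX^\dagger XX^\dagger)\rangle}{\langle ({\rm Tr}\,XX^\dagger)^2\rangle}.
\end{equation}

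Next I would compute the two Gaussian averages in (\ref{ratiosketch}) by Wick's theorem. Counting the pairings of the four matrix entries, one finds $\langle ({\rm Tr}\,XX^\dagger)^2\rangle=mn(mn+1)$ and $\langle {\rm Tr}(XX^\dagger XX^\dagger)\rangle=mn(m+n)$, and therefore
\begin{equation}\label{puritysketch}
\langle {\rm Tr}\,\varrho_A^2\rangle=\frac{m+n}{mn+1}.
\end{equation}
Then I would expand the von Neumann entropy around the maximally mixed state: writing $\varrho_A=(I+\Delta)/m$ with ${\rm Tr}\,\Delta=0$ and using $\log(I+\Delta)=\Delta-\frac12\Delta^2+\cdots$ gives $S(\varrho_A)=\log m-\frac{1}{2m}{\rm Tr}\,\Delta^2+\cdots$, and since ${\rm Tr}\,\Delta^2=m^2\,{\rm Tr}\,\varrho_A^2-m$ this reads $S(\varrho_A)=\log m-\frac m2\,{\rm Tr}\,\varrho_A^2+\frac12+\cdots$. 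Averaging, substituting (\ref{puritysketch}), and expanding for large $n$ via $\frac{m(m+n)}{2(mn+1)}=\frac12+\frac{m}{2n}+\cdots$ yields
\begin{equation}
\langle S(\varrho_A)\rangle\simeq\log m-\frac{m(m+n)}{2(mn+1)}+\frac12=\log m-\frac{m}{2n},
\end{equation}
which is (\ref{EntrAppr0}).

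The step I expect to be the main obstacle is turning the last two manipulations into a rigorous argument. The Taylor expansion of the entropy is only legitimate when $\varrho_A$ is genuinely close to $I/m$, which holds \emph{typically} but not for every state in the support of the measure, and the neglected contributions ($\langle {\rm Tr}\,\Delta^3\rangle$ and higher) must be shown not to spoil the $1/n$ accuracy. One clean way to fix this is to combine the moment computation above with a concentration-of-measure estimate (L\'evy's lemma), which shows that $\|\varrho_A-I/m\|$ is exponentially small with overwhelming probability, so the atypical region contributes negligibly to $\langle S(\varrho_A)\rangle$. Alternatively, one can bypass the expansion entirely and start from the exact joint density of the eigenvalues of $\varrho_A$ (a Laguerre/Wishart ensemble), integrating $-\sum_i\mu_i\log\mu_i$ against it to obtain Page's closed formula $\langle S(\varrho_A)\rangle=\sum_{k=n+1}^{mn}\frac1k-\frac{m-1}{2n}$, and then using $\sum_{k=n+1}^{mn}1/k=\log m+O(1/n)$; this requires more work but gives (\ref{EntrAppr0}) with an explicit remainder. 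For the present expository purposes the moment route is the most economical, with the error control indicated rather than carried out.
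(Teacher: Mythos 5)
Your proposal is correct and follows essentially the same route as the paper, which likewise estimates $\langle S(\varrho_A)\rangle$ through the average purity of a Gaussian realization of the random state (the paper relaxes the normalization and quotes $\langle \mathrm{tr}\,\varrho_A^2\rangle=(n+m)/nm$). Your version is in fact slightly sharper on two points: by exploiting the independence of the direction $X/\|X\|$ from the norm you keep the normalization exact and recover Lubkin's exact value $(m+n)/(mn+1)$, and you make explicit the purity-to-entropy step $S\simeq \log m-\frac{m}{2}\mathrm{Tr}\,\varrho_A^2+\frac{1}{2}$ that the paper compresses into ``agrees asymptotically,'' while correctly flagging that the control of the higher-order terms (via concentration of measure or Page's exact formula, both covered by the references the paper cites) is the one piece still owed.
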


%\noindent{\bf Theorem 4}\label{typical}
%Let $\ket{\psi_{AB}}$ be a bipartite pure state from
%$\mathcal{C}^{m}\ot\mathcal{C}^{n}$ $(m\leq n)$ drawn at random
%according to the Haar measure on the unitary group and
%$\varrho_{A}=\tr_{B}\proj{\psi_{AB}}$ be its subsystem acting on
%$\mathcal{C}^{m}$. Then,
%
%\begin{equation}\label{EntrAppr0}
%\langle S(\varrho_{A})\rangle \simeq \log m-\frac{m}{2n}.
%\end{equation}
%

%

Notice that the above  result   can
be estimated very easily  by relaxing the normalization constraint in the
distribution (\ref{distalpha}), and replacing it by a product of
independent Gaussian distributions,
$P(\alpha)=\prod_{i,j}(nm/\pi)\exp[-nm|\alpha_{ij}|^2]$, with
$\langle \alpha_{ij}\rangle=0$, and $\langle
|\alpha_{ij}|^2\rangle=1/nm$. According to the central limit theorem, the latter distribution
tends for $nm\to \infty$ to a Gaussian one for $\sum_{i=1}^m\sum_{j=1}^n
|\alpha_{ij}|^2$ centered at 1 of width $\simeq 1/\sqrt{nm}$.
One then straightforwardly  obtains that $\langle {\rm
tr}\varrho_A\rangle=1$, and after a little more tedious
calculation that $\langle {\rm tr}\varrho^2_A\rangle=(n+m)/nm$, which
agrees asymptotically with the above result for $nm\gg 1$.

\section{Area laws}
\label{Area}

Generally speaking, area laws mean that, when we consider a large region $R$ of a large system $L$ in a pure state, some of the physical properties of $R$ such as the von Neumann entropy of the reduced density matrix $\rho_R$ representing it will depend only on the boundary $\partial R$ (cf. Fig. \ref{FigureAreaLaw}).

\begin{figure}[t!] %  figure placement: here, top, bottom, or page
   \centering
   \includegraphics[width=0.70\textwidth]{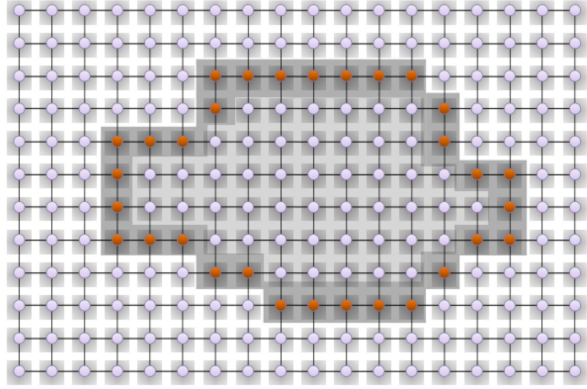}
   \caption{Schematic representation of a lattice system $L$, an arbitrary region
   $R$ (denoted in light grey background), and its boundary $\partial R$ (denoted
   in dark grey background).}
   \label{FigureAreaLaw}
\end{figure}

\subsection{Quantum area laws in 1D}

Let us start with the simplest case of one-dimensional
lattices,  $L=\{1,\ldots,N\}$. Let $R$ be a subset of $L$
consisting of $n$ contiguous spins starting from the first site,
i.e., $R=\{1,\ldots,n\}$ with $n<N$. In this case the boundary
$\partial R$ of the region $R$ contains one spin for open boundary conditions, and
two for periodic ones. Therefore, in this case the area law is extremely simple:
\begin{equation}\label{1DAreaLaw}
S(\varrho_{R})=O(1).
\end{equation}

The case of $D=1$ seems to be quite well understood. In general,
all local gapped systems (away from criticality) satisfy the above
law, and there might be a logarithmic divergence of entanglement
entropy when the system is critical. To be more precise, let us
recall the theorem by Hastings leading to the first of the above
statements, followed by examples of critical systems showing a
logarithmic divergence of the entropy with the size of $R$.

Consider the nearest-neighbor interaction Hamiltonian
\begin{equation}\label{OneDHamilt}
H=\sum_{i\in L}H_{i,i+1},
\end{equation}
where each $H_{i,i+1}$ has a nontrivial support only on the sites
$i$ and $i+1$. We assume also that the operator norm of all the
terms in Eq. (\ref{OneDHamilt}) are upper bounded by some positive
constant $J$, i.e., $\|H_{i,i+1}\|\leq J$ for all $i$ (i.e., we
assume that the interaction strength between $i$th site and its
nearest-neighbor is not greater that some constant). Under these
assumptions, Hastings proved the following theorem \cite{HastingsJSM}.

\begin{thm}\label{AreaLaw1}
Let $L$ be a one-dimensional lattice with $N$ $d$-dimensional sites,
and let $H$ be a local Hamiltonian (\ref{OneDHamilt}).
Assuming that $H$ has a unique ground state separated from the
first excited states by the energy gap $\Delta E>0$, the entropy
of any region $R$ satisfies
\begin{equation}
S(\varrho_{R})\leq 6c_{0}\xi2^{6\xi\log d}\log\xi\log d
\end{equation}
with $c_{0}$ denoting some constant of the order of unity and
$\xi=\min\{2v/\Delta E,\xi_{C}\}$. Here, $v$ denotes the sound
velocity and is of the order of $J$, while $\xi_{C}$ is a length scale of
order unity.
\end{thm}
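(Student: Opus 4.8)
\textbf{Proof proposal for Theorem~\ref{AreaLaw1} (Hastings' 1D area law).}

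The plan is to follow Hastings' original strategy, whose heart is the Lieb--Robinson bound together with a Fourier-transform construction of an operator that approximately reproduces the ground-state projector locally. First I would establish the Lieb--Robinson bound for the nearest-neighbor Hamiltonian~(\ref{OneDHamilt}): for operators $A$ and $B$ supported on regions separated by distance $\ell$, the commutator $\|[A(t),B]\|$ of the time-evolved operator $A(t)=e^{\ci Ht}Ae^{-\ci Ht}$ is bounded by $c\|A\|\|B\|e^{-(\ell-v|t|)/\xi_C}$ for constants $c,v,\xi_C$ depending only on $J$ and $d$; here $v=O(J)$ is the ``sound velocity'' and $\xi_C=O(1)$. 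This is the quantitative statement of finite propagation speed that makes the rest work, and its proof is a standard interaction-picture Dyson-series estimate on the lattice.

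Next I would use the spectral gap $\Delta E$ to build, from the Lieb--Robinson bound, an operator that is both \emph{quasi-local} and \emph{close to the ground-state projector} $P=\proj{\psi_{AB}}$. The idea is to write $P$ (or rather a good approximation to it) as an integral $\int dt\, f(t)\, e^{\ci Ht} O e^{-\ci Ht}$ for a suitable rapidly-decaying filter function $f$ whose Fourier transform is supported near zero energy on a scale set by $\Delta E$; the gap ensures such a filter picks out the ground-state subspace up to exponentially small error, and the Lieb--Robinson bound ensures that time-evolving an initially local $O$ for times $|t|\lesssim \xi$ with $\xi=\min\{2v/\Delta E,\xi_C\}$ keeps it supported, up to exponentially small tails, within distance $O(\xi\log(1/\varepsilon))$ of the support of $O$. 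Combining these, one obtains that the ground state ``looks locally like'' a state with small correlation length $\xi$, in the sense that reduced density matrices on far-separated regions approximately factorize and, more importantly for entropy, that $\varrho_R$ can be approximated by a state obtained by acting with a quasi-local operator across the boundary.

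From the quasi-locality one extracts the entropy bound by a standard argument: approximate $\varrho_R$ by $\tilde\varrho_R$ supported (effectively) on a boundary slab of width $\sim\xi\log d$ tensored with a product state deep inside, control the approximation error in trace norm by the exponential tails above, and then apply the Fannes--Audenaert continuity inequality to convert the trace-norm closeness into closeness of von Neumann entropies. Since the effective boundary region carries at most $\log$ of its Hilbert-space dimension, i.e.\ $O(\xi\log d\cdot\log d)$ worth of entropy, one arrives at a bound of the advertised form $S(\varrho_R)\le 6c_0\,\xi\,2^{6\xi\log d}\log\xi\log d$; the exponential factor $2^{6\xi\log d}$ is exactly the price of the ``width $\sim\xi\log d$ boundary slab has Hilbert-space dimension $d^{O(\xi)}$'' step, and one must be careful to optimize the truncation scale against the Lieb--Robinson tail to land on precisely these constants.

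The main obstacle is the middle step: turning the gap plus Lieb--Robinson bound into a genuinely quasi-local approximation of the ground-state projector with \emph{controlled, explicit} error constants. Getting the qualitative statement is not hard, but tracking the dependence on $d$, $J$, and $\Delta E$ through the Fourier filter construction and the iterated commutator estimates --- so that the final $\xi$ is genuinely $\min\{2v/\Delta E,\xi_C\}$ and the prefactors are $O(1)$ --- is the delicate, bookkeeping-heavy core of Hastings' argument. Everything downstream (Fannes-type continuity, counting boundary dimensions) is routine by comparison.
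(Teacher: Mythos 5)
The paper does not actually prove this theorem: it is quoted from Hastings \cite{HastingsJSM}, and the text only remarks that the constants $v$ and $\xi_C$ originate in the Lieb--Robinson bound. So there is no in-paper proof to compare against; your proposal has to stand on its own as a reconstruction of Hastings' argument. As a roadmap it identifies the right ingredients --- the Lieb--Robinson bound, a Fourier-filtered time evolution that uses the gap to approximate the ground-state projector by a quasi-local operator, and a final dimension-counting step --- but it is a plan rather than a proof. The two steps you yourself flag as ``delicate'' and ``bookkeeping-heavy'' are precisely where the theorem lives: without carrying out the filter construction with explicit error bounds, nothing quantitative (in particular the specific $\xi=\min\{2v/\Delta E,\xi_C\}$ and the factor $2^{6\xi\log d}$) is established.

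There is also a concrete problem with your final step. The Fannes--Audenaert inequality bounds $|S(\varrho_R)-S(\tilde\varrho_R)|$ by a term of order $\epsilon\log D_R$ where $D_R=d^{|R|}$ is the dimension of the Hilbert space carrying \emph{both} states; a fixed trace-norm error $\epsilon$ therefore gives an entropy error growing linearly in $|R|$, which is exactly the volume-law behaviour the theorem is supposed to exclude. To make this route work you would need $\epsilon$ to shrink faster than $1/|R|$, and controlling that is not ``routine.'' Hastings avoids the issue entirely: rather than approximating $\varrho_R$ by a slab state and invoking continuity of entropy, he constructs an approximation $O_L O_C O_R$ to the ground-state projector in which the cross-boundary factor $O_C$ has bounded rank, and from this deduces directly that the Schmidt coefficients of the ground state across the cut decay fast enough to give an $|R|$-independent entropy. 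That rank/eigenvalue-decay argument is the missing idea in your sketch, and replacing it by ``Fannes on a boundary slab'' would fail as stated.
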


%\noindent{\bf Theorem 5}
%Let $L$ be a one-dimensional lattice with $N$ $d$-dimensional sites,
%and let $H$ be a local Hamiltonian as in Eq.\ (\ref{OneDHamilt}).
%Assuming that $H$ has a unique ground state separated from the
%first excited states by the energy gap $\Delta E>0$, the entropy
%of any region $R$ satisfies
%
%\begin{equation}
%S(\varrho_{R})\leq 6c_{0}\xi2^{6\xi\log d}\log\xi\log d
%\end{equation}
%
%with $c_{0}$ denoting some constant of the order of unity and
%$\xi=\min\{2v/\Delta E,\xi_{C}\}$. Here, $v$ denotes the sound
%velocity and is of the order of $J$, while $\xi_{C}$ is a length scale of
%order unity.
%

Let us remark that both constants appearing in the above theorem
come about from the Lieb-Robinson bound \cite{LiebRobinson} (see also
Ref.\ \cite{Masanes} for a recent simple proof of this bound).
This theorem tells us that when the one-dimensional system with
the local interaction defined by Eq.\ (\ref{OneDHamilt}) is away
from the criticality $(\Delta E>0)$, the entropy of $R$ is bounded
by some constant independent of $|R|$. One can naturally ask if there
exist gapped systems with long-range interaction violating
(\ref{1DAreaLaw}). This was answered in the affirmative in Ref.
\cite{Dur05,EisertOsborne}, which gave examples of one-dimensional
models with long--range interactions, nonzero energy gap, and
scaling of entropy diverging logaritmically with $n$.

The second question one can ask is about the behavior of the
entropy when $\Delta E\to 0$ and the system
becomes critical. Numerous analytical and numerical results show
that usually one observes a logarithmic divergence of
$S(\varrho_{R})$ with the size of the region $R$ (we refer the reader to recent reviews
\cite{EisertRMP,Latorre09}, and to the special issue of
J. Phys. A devoted to this subject \cite{CardyJPA}).

Concluding, let us mention that there is an extensive literature
on the logarithmic scaling of the block entropy using conformal
field theory methods (see Ref.\ \cite{Calabrese09} for a very good
overview of these results). Quite
generally, the block entropy at criticality scales as
\begin{equation}
S(\varrho_{R})=\frac{c}{3}\log_{2}\left(\frac{|R|}{a}\right)+O(1),
\end{equation}
or, more in general for the R\'enyi entropy\footnote{Recall that
the quantum R\'enyi entropy is defined as
%
%\begin{equation}
$S_{\alpha}=\log_{2}\left[\Tr\left(\varrho^{\alpha}\right)\right]/(1-\alpha)$
%\end{equation}
%
where $\alpha\in[0,\infty]$. For $\alpha=0$ one has
$S_{0}(\varrho)=\log_{2}\mathrm{rank}(\varrho)$ and
$S_{\infty}=-\log_{2}\lambda_{\mathrm{max}}$ with
$\lambda_{\mathrm{max}}$ being the maximal eigenvalue of
$\varrho$. }
\be
S_{\alpha}(\varrho_{R})=\frac{c}{6}\left(1+\frac{1}{\alpha}\right)\log_{2}\left(\frac{|R|}{a}\right)+O(1),
\ee
where $c$ is called the {\it central charge} of the
underlying conformal field theory, and $a$ is the cutoff parameter
(the lattice constant for lattice systems).
Recently, these results were generalized in Ref. \cite{Brandao}, where the authors derived the area laws  only from the assumption of the exponential decay of correlations, and without any assumption about the gap.

\subsection{Higher--dimensional systems}

The situation is much more complex in higher spatial dimensions
$(D>1)$. The boundary $\partial R$ of the general area law, Eq.
(\ref{AreaLawGen}), is no longer a simple one or two--element set
and can have a rather complicated structure. Even if there are no
general rules discovered so far, it is rather believed that
\begin{equation}\label{AreaLawGen}
S(\varrho_{R})=O(|\partial R|).
\end{equation}
holds for ground states of local gapped
Hamiltonians. This intuition is supported by results showing that
for quadratic quasifree fermionic and bosonic lattices the area
law (\ref{AreaLawGen}) holds \cite{EisertRMP}.
Furthermore, for critical fermions the entropy of a cubic region
$R=\{1,\ldots,n\}^{D}$ is bounded  as $\gamma_1
n^{D-1}\log_{2}n\leq S(\varrho_{R})\leq \gamma_2
n^{D-1}(\log_{2}n)^{2}$ with $\gamma_{i}$ $(i=1,2)$ denoting some
constants  \cite{WolfPRL06,GioevKlich06,FarkasZimboras07}.
Notice that the proof of this relies on the fact that the
logarithmic negativity\footnote{Negativity and logarithmic negativity
are entanglement measures based on partial transpose. The first one is defined as
$N(\varrho_{AB})=(1/2)(\|\varrho_{AB}^{T_B}\|-1)$ \cite{Zyczkowski98,Vidal02b}. The calculation of $N$ even for mixed states reduces to determination of eigenvalues of $\varrho_{AB}^{T_{B}}$, and
amounts to the sum of the absolute values of negative eigenvalues
of $\varrho_{AB}^{T_{B}}$.
Then, the logarithmic negativity is defined as
$E_{N}(\varrho_{AB})=\log_2\|\varrho^{\Gamma}_{AB}\|=\log_2[2N(\varrho_{AB})+1]$ \cite{Vidal02b}. }
%
%It was shown in Ref.\ \cite{Plenio05a} that it satisfies condition (\ref{monot2}). Moreover, logarithmic negativity is additive,
%i.e., $E(\varrho_{AB}\ot\sigma_{AB})=E(\varrho_{AB})+E(\sigma_{AB})$ for
%any pair of density matrices $\varrho_{AB}$ and $\sigma_{AB}$,
%which is a desirable feature. On the other hand, it is not convex.}
upper bounds the von Neumann
entropy, i.e., for any $\ket{\psi_{AB}}$, the
inequality $S(\varrho_{A(B)})\leq E_{N}(\ket{\psi_{AB}})$ holds.
This in turn is a consequence of monotonicity of the R\'enyi
entropy $S_{\alpha}$ with respect to the order $\alpha$, i.e.,
$S_{\alpha}\leq S_{\alpha'}$ for $\alpha\geq \alpha'$. This is one
of the numerous instances, where insights from quantum information help to
deal with problems in many--body physics.

Recently, Masanes \cite{Masanes} showed that in
the ground state (and also low--energy eigenstates) the entropy of a
region $R$ (even a disjoint one) always scales at most as the size
of $|\partial R|$ with some correction proportional to
$(\log|R|)^{D}$ --- as long as the Hamiltonian $H$ is of the local
form
\begin{equation}
H=\sum_{i\in L}H_{i},
\end{equation}
where each $H_{i}$ has nontrivial support only on the
nearest-neighbors of the $i$th site, and, as before, satisfies
$\|H_{i}\|\leq J$ for some $J>0$. Thus, the behavior of entropy
which is considered to be a violation of the area law, can in fact be treated as an area law itself.
This is because in this case\footnote{It should be noticed that
one can have much stronger condition for such scaling of entropy.
To see this explicitly, say that $R$ is a cubic region
$R=\{1,\ldots,n\}^{D}$ meaning that $|\partial R|=n^{D-1}$ and
$|R|=n^{D}$. Then since $\lim_{n\to\infty}[(\log
n)/n^{\epsilon}]=0$ for any (even arbitrarily small) $\epsilon>0$,
one easily checks that $S(\varrho_{R})/|\partial
R|^{1+\epsilon}\to 0$ for $|\partial R|\to \infty$.}
$[|\partial R|(\log|R|)^{k}]/|R|\to 0 $ for $|R|\to \infty$ with
some $k>0$, meaning that this behaviour of entropy is still very
different from the typical one that follows from Theorem
\ref{typical}. That is, putting $m=d^{|R|}$ and $n=d^{|L\setminus
R|}$ with $|L|\gg|R|$, one has that $S(\varrho_{R})/|R|$ is
arbitrarily close to $\log d$ for large $|R|$. More precisely,
the following theorem was proven in Ref. \cite{Masanes}.

\begin{thm}
Let $R$ be some arbitrary (even disjoint) region of $L$. Then,
provided that certain "natural" bounds on correlation functions (polynomial decay with distance)
and on the density of states (number of eigenstates of the Hamiltonian limited to $R$
with energies smaller than $e$ is exponentially
bounded by $|R|^{\gamma(e-e_0)}$, where $\gamma$ is a constant, and $e_0$ is
the lowest energy) hold, the entropy of the reduced density matrix $\varrho_{R}$ of
the ground state of $H$ satisfies
\begin{equation}\label{formula}
S(\varrho_{R})\leq C|\partial R|(10\xi\log|R|)^{D}
%\left[\frac{D}{\xi}(\gamma J3^{D}+\eta)+\log d\right]
+O(|\partial R|(\log|R|)^{D-1}),
\end{equation}
where $C$ collects the constants $D,\xi,\gamma,J,\eta$, and $d$.
If $R$ is a cubic region, the above statement simplifies, giving
$S(\varrho_{R})\leq \widetilde{C}|\partial R|\log|R|+O(|\partial
R|)$ with $\widetilde{C}$ being some constant.
\end{thm}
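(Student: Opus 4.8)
The plan is to adapt the two-step strategy of Masanes \cite{Masanes}, in which the spectral gap used in Hastings' Theorem \ref{AreaLaw1} is replaced by the two structural hypotheses at hand: polynomial clustering of correlations and a sub-exponential density of low-energy states. Step one localizes the entropy of $R$ to a thin belt around $\partial R$; step two bounds the entropy of that belt not by its Hilbert-space dimension but, where the geometry allows, by a counting of the low-lying eigenstates of the Hamiltonian restricted to it. Two lemmas do the work. \emph{Lemma A (entropy versus smooth rank).} If $\rho$ has eigenvalues $p_1\ge p_2\ge\cdots$ and $\mathcal{N}(\delta)=\#\{k:p_k>\delta\}$, then for every $\delta>0$
\[
S(\rho)\ \le\ \log\mathcal{N}(\delta)\ +\ T(\delta),\qquad T(\delta):=-\sum_{k:\,p_k\le\delta}p_k\log p_k,
\]
and the tail $T(\delta)$ is small as long as $\mathcal{N}(\cdot)$ does not blow up too fast near $0$: splitting the small eigenvalues into dyadic windows $[2^{-j-1}\delta,2^{-j}\delta)$ and using that window $j$ holds at most $\mathcal{N}(2^{-j-1}\delta)$ of them, one sees that $T(\delta)$ is controlled once $\mathcal{N}$ is polynomially bounded — which is where the density-of-states hypothesis $|R|^{\gamma(e-e_0)}$ enters. \emph{Lemma B (localization).} Polynomial decay of correlations implies that truncating $H$ to the terms supported within distance $\ell$ of $R$ perturbs $\varrho_R$ by at most $\epsilon(\ell)\le \mathrm{poly}(|R|)\,(\xi/\ell)^{\eta}$ in trace norm; by continuity (a Fannes-type bound) of the von Neumann entropy it therefore suffices to prove the area law for the reduced state $\widetilde\varrho_R$ of the ground state of this truncated, spatially localized Hamiltonian.

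I would then use that $\widetilde\varrho_R$ is the reduction of a \emph{pure} state supported on the $\ell$-neighbourhood $R_\ell$ of $R$, so that $S(\widetilde\varrho_R)$ equals the entropy of the reduced state on the belt $C=R_\ell\setminus R$. Choosing $\ell\simeq 10\xi\log|R|$ makes $\epsilon(\ell)\le 1/\mathrm{poly}(|R|)$, as required by Lemma A, and confines $C$ to within distance $\ell$ of $\partial R$. Covering $\partial R$ by one lattice cube of side $\ell$ around each of its $\sim|\partial R|$ sites, one bounds the entropy contributed by each cube either by the crude dimension estimate $\ell^D\log d$, or — when the geometry allows — more sharply via Lemma A with $\mathcal{N}(\delta)$ replaced by the number of low-lying eigenstates of the effective Hamiltonian on that cube, which a variational argument relates to the spectral weight of $\widetilde\varrho_R$ and the density-of-states bound caps. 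Summing over the $O(|\partial R|)$ cubes and absorbing $D,\xi,\gamma,J,\eta,d$ into a single constant $C$ gives
\[
S(\varrho_R)\ \le\ C|\partial R|\,(10\xi\log|R|)^D+O\!\big(|\partial R|(\log|R|)^{D-1}\big).
\]
When $R$ is a cube its boundary is a flat slab and the belt $C$ is covered efficiently with total volume $\sim|\partial R|\,\ell$, so the bound collapses to $\widetilde C|\partial R|\log|R|+O(|\partial R|)$; the extra powers of $\log|R|$ in the general (possibly disjoint) case come precisely from the wasteful covering of a rough boundary by volume-$\ell^D$ cubes.

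The main obstacle is that, with no gap available, none of the quasi-adiabatic / Lieb--Robinson-plus-gap machinery behind Hastings' theorem applies \cite{LiebRobinson,HastingsJSM}, so every estimate must be squeezed out of the two hypotheses. Concretely, the two genuinely hard points are: (i) Lemma B must be made quantitative for merely \emph{polynomial} clustering, i.e. one must show that deleting the interaction terms outside an $\ell$-neighbourhood of $R$ moves $\varrho_R$ by only $\mathrm{poly}(|R|)(\xi/\ell)^{\eta}$ in trace norm — a non-perturbative statement about ground states of gapless Hamiltonians; and (ii) the tail estimate $T(\delta)$ in Lemma A must be turned into a genuine bound using only the density-of-states exponent $\gamma$, which is what forces $\gamma$ into the final constant. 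The geometric bookkeeping for disjoint regions is routine but, as noted, is responsible for the $(\log|R|)^D$ rather than $\log|R|$ scaling. The same circle of ideas, with the gap removed entirely and only exponential clustering retained, underlies the later result of \cite{Brandao}.
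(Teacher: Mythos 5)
First, a point of reference: the lecture notes do not actually prove this theorem --- it is quoted from Ref.~\cite{Masanes} with no argument given --- so your sketch can only be judged against what that reference does. Measured that way, the decisive problem is your Lemma~B, which you flag as ``hard point (i)'' but on which the entire architecture rests. Polynomial decay of correlations is a property of a single state (the ground state of the full $H$); it says nothing about how that ground state responds when the Hamiltonian is modified. The claim that deleting all interaction terms outside an $\ell$-neighbourhood of $R$ moves $\varrho_R$ by only $\mathrm{poly}(|R|)(\xi/\ell)^{\eta}$ in trace norm is a ``local perturbations perturb locally'' statement, and every known result of that type (quasi-adiabatic continuation and its relatives) requires a uniform spectral gap --- precisely the hypothesis this theorem is designed to dispense with. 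For a gapless system, cutting the lattice at distance $\ell$ from $R$ can reorganize the ground state all the way down to $R$; and even where one can argue that few-body expectation values shift only by $O(\ell^{-\eta})$, trace-norm closeness of the full $d^{|R|}$-dimensional reduced state is a vastly stronger requirement, which correlation decay cannot plausibly deliver. Everything downstream (purity of the truncated ground state on $R_\ell$, the identification $S(\widetilde\varrho_R)=S(\widetilde\varrho_{R_\ell\setminus R})$, the cube covering of the belt) presupposes this lemma, so the proof does not close.

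The route of Ref.~\cite{Masanes} avoids the issue by never introducing the ground state of a truncated Hamiltonian. Writing $H=H_R+H_{R^c}+H_\partial$ with $\|H_\partial\|\le J|\partial R|$, a variational comparison gives $\langle H_R\rangle_{\psi_0}-\min\mathrm{spec}(H_R)=O(J|\partial R|)$ for the \emph{global} ground state $\psi_0$; a Markov-type argument then shows that $\varrho_R$ places all but a small weight on the eigenspace of $H_R$ below $e_0+\Delta$, whose dimension the density-of-states hypothesis caps by $|R|^{\gamma\Delta}$, so the entropy is essentially $\gamma\Delta\log|R|$ plus a tail --- exactly the role your Lemma~A (which is correct and standard) anticipates. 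The crude Markov tail is not good enough on its own (optimizing $\Delta$ against a Fannes-type tail of size $\varepsilon|R|\log d$ yields only a $\sqrt{|R|\,|\partial R|}$ scaling), and it is here, in sharpening the tail of the local-energy distribution and in smearing $H_R$ over boxes of side $\ell\simeq 10\xi\log|R|$ via Lieb--Robinson-type estimates, that the correlation-decay hypothesis is actually consumed; this is also the origin of the $|\partial R|(10\xi\log|R|)^{D}$ volume factor. In short: your Lemma~A and the final geometric bookkeeping are in the right spirit, but the localization step must be replaced by this energy/weight argument on the untruncated ground state; as written, Lemma~B is not a technicality to be filled in later but an unproved (and, absent a gap, likely unprovable) assumption.
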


%\noindent{\bf Theorem 6}
%Let $R$ be some arbitrary (even disjoint) region of $L$. Then,
%provided that certain "natural" bounds on correlation functions (polynomial decay with distance)
%and on density of states (number of eigenstates of the Hamiltonian limited to $R$  with energies $<e$
%is exponentialy bounded by $|R|^{\gamma(e-e_0)}$, where $\gamma$ is a constant, and $e_0$ is
%the lowest energy)
%hold, the entropy of the reduced density matrix $\varrho_{R}$ of
%the ground state of $H$ satisfies
%
%\begin{equation}\label{formula}
%S(\varrho_{R})\leq C|\partial R|(10\xi\log|R|)^{D}
%\left[\frac{D}{\xi}(\gamma J3^{D}+\eta)+\log d\right]
%+O(|\partial R|(\log|R|)^{D-1}),
%\end{equation}
%
%where $C$ collects the constants $D,\xi,\gamma,J,\eta$, and $d$.
%If $R$ is a cubic region, the above statement simplifies, giving
%$S(\varrho_{R})\leq \widetilde{C}|\partial R|\log|R|+O(|\partial
%R|)$ with $\widetilde{C}$ being some constant.

\subsubsection{Area laws for mutual information -- classical and quantum Gibbs states}

So far, we considered area laws only for ground states of local
Hamiltonians. In addition, it would be very interesting to ask
similar questions for nonzero temperatures. Here, however, one
cannot rely on the entropy of a subsystem, as in the case of mixed
states it is no longer an entanglement measure. Instead, one can use the
{\it quantum mutual information}
which measures the total amount of correlation in bipartite quantum systems
\cite{GroismanPopescuWinterPRA05}. It is defined as
\begin{equation}\label{Mutual}
I(A:B)=S(\varrho_{A})+S(\varrho_{B})-S(\varrho_{AB}),
\end{equation}
where $\varrho_{AB}$ is some bipartite state and
$\varrho_{A(B)}$ stand for its subsystems.
It should be noticed that for pure states the
mutual information reduces to twice the amount of entanglement of
the state.

Recently, it was proven that thermal states
$\varrho_{\beta}=e^{-\beta H}/\tr[e^{-\beta H}]$ with local
Hamiltonians $H$ obey an area law for mutual information.
Interestingly, a similar conclusion was drawn for classical
lattices, in which we have a classical spin with the
configuration space $\mathbbm{Z}_{d}$ at each site, and instead of density
matrices one deals with probability distributions. In the following
we review these two results, starting from the classical
case.

To quantify correlations in classical systems, we use the
classical mutual information,  defined as in Eq.\ (\ref{Mutual})
with the von Neumann entropy substituted by the Shannon entropy
$H(X)=-\sum_{x}p(x)\log_2 p(x)$, where $p$ stands for a
probability distribution characterizing random variable $X$. More
precisely, let $A$ and $B=S\setminus A$ denote two subsystems of
some classical physical system $S$. Then, let $p(x_A)$ and
$p(x_B)$ be the marginals of the joint probability distribution
$p(x_{AB})$ describing $S$ ($x_{a}$ denotes the possible
configurations of subsystems $a=A,B,AB$). The correlations between
$A$ and $B$ are given by the \textit{classical mutual information}
\begin{equation}\label{MutualInformation}
I(A:B)=H(A)+H(B)-H(AB).
\end{equation}
We are now ready to recall the results of
\cite{Wolf08}.

\begin{thm}
Let $L$ be a lattice with $d$--dimensional classical spins at each
site. Let $p$ be a Gibbs probability distribution coming from
finite--range interactions on $L$. Then, dividing $L$ into regions
$A$ and $B$, one has
\begin{equation}
\label{AreaLawClassicalMutInf}
I(A:B)\leq |\partial A|\log d.
\end{equation}
\end{thm}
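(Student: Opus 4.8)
The plan is to combine the Markov-field structure of finite-range Gibbs distributions with the chain rule for mutual information. First I would make the boundary precise: if the interactions have range $r$, let $\partial A$ be the set of sites of $A$ within distance $r$ of $B$, and write $A=A^{\circ}\cup\partial A$ with $A^{\circ}=A\setminus\partial A$ the bulk of $A$. A Gibbs distribution coming from finite-range interactions, $p(x)\propto e^{-\beta H(x)}$ with $H=\sum_{c}h_{c}$, has the explicit factorized form $p(x)\propto\prod_{c}\phi_{c}(x_{c})$, the product running over interaction terms $c$ whose supports have diameter at most $r$ and with $\phi_{c}=e^{-\beta h_{c}}>0$. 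By construction no such support can contain both a site of $A^{\circ}$ and a site of $B$, since any site of $A^{\circ}$ is at distance more than $r$ from all of $B$; hence the product splits and $p(x)=\frac{1}{Z}\,f(x_{A^{\circ}},x_{\partial A})\,g(x_{B},x_{\partial A})$, with $Z$ the normalization. Conditioning on $x_{\partial A}$ therefore makes the configurations on $A^{\circ}$ and on $B$ independent, i.e.\ $I(A^{\circ}:B\,|\,\partial A)=0$. (One may equivalently invoke the Hammersley--Clifford theorem, using that $p$ is strictly positive on the finite lattice.)

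Next I would apply the chain rule for mutual information to the splitting $A=A^{\circ}\cup\partial A$,
\[
I(A:B)=I(\partial A:B)+I(A^{\circ}:B\,|\,\partial A)=I(\partial A:B),
\]
the last equality by the conditional independence just established. It remains to bound $I(\partial A:B)$: since $I(\partial A:B)=H(\partial A)-H(\partial A\,|\,B)$ and the conditional Shannon entropy is nonnegative, $I(\partial A:B)\leq H(\partial A)$; and since $\partial A$ carries $|\partial A|$ classical spins, each with configuration space $\mathbbm{Z}_{d}$, the maximal-entropy bound gives $H(\partial A)\leq\log(d^{|\partial A|})=|\partial A|\log d$, with equality only for the uniform distribution. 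Chaining the three estimates yields $I(A:B)\leq|\partial A|\log d$, which is Eq.\ (\ref{AreaLawClassicalMutInf}).

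The only genuine obstacle is the first step, and there the work is purely combinatorial: one must check that the chosen $\partial A$ is thick enough to screen \emph{every} interaction term, so that no $\phi_{c}$ couples $A^{\circ}$ directly to $B$ --- equivalently, that $\partial A$ separates $A^{\circ}$ from $B$ in the interaction graph. This is exactly where the finite-range hypothesis is used; for genuinely long-range (e.g.\ power-law) couplings the screening fails and the area law can indeed be violated. Everything afterwards is elementary information theory. Two remarks worth making along the way: the bound is temperature-independent, since $\beta$ and the coupling strengths enter only through the factors $\phi_{c}$, which the argument never inspects, so the estimate holds uniformly in $\beta$; and $\partial A$ here must be understood as the full interaction-range-thick boundary layer (for nearest-neighbor interactions this is the single layer of Fig.\ \ref{FigureAreaLaw}, and for general finite range it is a constant number of layers, so $|\partial A|$ remains a genuine boundary quantity).
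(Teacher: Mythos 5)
Your proof is correct, and it is essentially the intended argument: the paper only recalls this theorem from Wolf \emph{et al.} without reproducing a proof, but its very next sentence (that the quantum case is harder because "the Markov property does not hold in general") confirms that the standard route is exactly the one you take --- Hammersley--Clifford/Markov screening by an interaction-range-thick boundary layer, the chain rule $I(A:B)=I(\partial A:B)+I(A^{\circ}:B\,|\,\partial A)$ with the conditional term vanishing, and the trivial entropy bound $I(\partial A:B)\leq H(\partial A)\leq |\partial A|\log d$. Your remarks on the temperature-independence of the bound and on the need to interpret $\partial A$ as the full range-$r$ boundary layer are both apt.
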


%\noindent{\bf Theorem 7}
%{\rm(area law for classical Gibbs states)}
%Let $L$ be a lattice with $d$--dimensional classical spins at each
%site. Let $p$ be a Gibbs probability distribution coming from
%finite--range interactions on $L$. Then, dividing $L$ into regions
%$A$ and $B$, one has
%
%\begin{equation}
%\label{AreaLawClassicalMutInf}
%I(A:B)\leq |\partial A|\log d.
%\end{equation}
%

Let us now show that a similar conclusion can be drawn in the case
of quantum thermal states \cite{Wolf08}, where the Markov property
does not hold in general.

\begin{thm}
Let $L$ be a lattice consisting of $d$-dimensional quantum systems
divided into parts $A$ and $B$ ($L=A\cup B$). Thermal states
$(T>0)$ of local Hamiltonians $H$ obey the following area law
\begin{equation}
\label{MutualAreaLaw}
I(A:B)\leq \beta
\tr[H_{\partial}(\varrho_{A}\ot\varrho_{B}-\varrho_{AB})].
\end{equation}
where $H_{\partial}$ stands for
interaction terms connecting these two regions.
\end{thm}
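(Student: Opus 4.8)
The plan is to deduce the bound from the variational principle characterizing the Gibbs state: among all density matrices on $\mathcal{H}_{AB}$, the thermal state $\varrho_{AB}\equiv\varrho_{\beta}=e^{-\beta H}/Z$, with $Z=\tr[e^{-\beta H}]$, is the unique minimizer of the free-energy functional $F(\sigma)=\tr[H\sigma]-\beta^{-1}S(\sigma)$. This is nothing but a rewriting of the nonnegativity of the relative entropy (Klein's inequality), $S(\sigma\|\varrho_{\beta})=\tr[\sigma\log\sigma]-\tr[\sigma\log\varrho_{\beta}]\geq 0$: substituting $\log\varrho_{\beta}=-\beta H-(\log Z)\,\mathbbm{1}$ converts it into $\beta\,[F(\sigma)-F(\varrho_{\beta})]\geq 0$. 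In finite dimension $\varrho_{\beta}>0$, so $\log\varrho_{\beta}$ and all the quantities above are well defined.

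First I would apply this inequality to the test state $\sigma=\varrho_{A}\ot\varrho_{B}$, where $\varrho_{A}=\tr_{B}\varrho_{AB}$ and $\varrho_{B}=\tr_{A}\varrho_{AB}$ are the marginals of the thermal state. Using additivity of the von Neumann entropy on a product, $S(\varrho_{A}\ot\varrho_{B})=S(\varrho_{A})+S(\varrho_{B})$, the inequality $F(\varrho_{AB})\leq F(\varrho_{A}\ot\varrho_{B})$ rearranges into
\[
S(\varrho_{A})+S(\varrho_{B})-S(\varrho_{AB})\ \leq\ \beta\,\tr[H(\varrho_{A}\ot\varrho_{B}-\varrho_{AB})],
\]
and the left-hand side is exactly $I(A:B)$ by the definition~(\ref{Mutual}).

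The last step is to see that on the right-hand side only the terms of $H$ crossing the cut survive. Decompose $H=H_{A}+H_{B}+H_{\partial}$, where $H_{A}$ (resp.\ $H_{B}$) is the sum of the local terms $H_{i}$ supported entirely inside $A$ (resp.\ inside $B$), and $H_{\partial}$ collects the remaining terms, which act nontrivially on both regions. Since $\varrho_{A}\ot\varrho_{B}$ and $\varrho_{AB}$ reduce to the same operator $\varrho_{A}$ on subsystem $A$, and $H_{A}$ acts trivially on $B$, one has $\tr[H_{A}(\varrho_{A}\ot\varrho_{B}-\varrho_{AB})]=\tr[H_{A}(\varrho_{A}-\varrho_{A})]=0$; the mirror-image argument on the $B$ side annihilates the $H_{B}$ term. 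Hence $I(A:B)\leq\beta\,\tr[H_{\partial}(\varrho_{A}\ot\varrho_{B}-\varrho_{AB})]$, which is~(\ref{MutualAreaLaw}).

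The proof is short, and essentially the only delicate point is the variational step, which must be invoked with the correct sign of Klein's inequality and with the reassurance that $\log\varrho_{\beta}$ is legitimate (automatic for a finite-dimensional Gibbs state). I would then append the standard remark that makes this a genuine area law: under the locality hypothesis $\|H_{i}\|\leq J$ the boundary operator $H_{\partial}$ is supported in a shell around $\partial A$ with $\|H_{\partial}\|=O(|\partial A|)$, so that $I(A:B)\leq 2\beta\,\|H_{\partial}\|=O(|\partial A|)$. This estimate is a consequence of the displayed inequality rather than a part of establishing it, which is why the theorem is stated in the sharper, norm-free form~(\ref{MutualAreaLaw}).
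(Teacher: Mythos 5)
Your proof is correct and is essentially the argument behind this theorem (the paper states it without proof, citing Wolf \emph{et al.}): the Gibbs variational principle applied to the product of the marginals, followed by cancellation of the terms of $H$ supported entirely in $A$ or in $B$. The concluding remark recovering $I(A:B)\leq 2\beta\|h\|\,|\partial A|$ likewise matches the comment the paper makes immediately after the theorem.
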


%\noindent{\bf Theorem 8}
%Let $L$ be a lattice consisting of $d$-dimensional quantum systems
%divided into parts $A$ and $B$ ($L=A\cup B$). Thermal states
%$(T>0)$ of local Hamiltonians $H$ obey the following area law
%
%\begin{equation}
%\label{MutualAreaLaw}
%I(A:B)\leq \beta
%\tr[H_{\partial}(\varrho_{A}\ot\varrho_{B}-\varrho_{AB})].
%\end{equation}

Let us notice that the right--hand side of Eq.
(\ref{MutualAreaLaw}) depends only on the boundary, and therefore
it gives a scaling of mutual information similar to the classical
case (\ref{AreaLawClassicalMutInf}). Moreover, for the nearest-neighbor
interaction, Eq.\ (\ref{MutualAreaLaw}) simplifies to $I(A:B)\leq
2\beta\|h\|\,|\partial A|$ with $\|h\|$ denoting the largest
eigenvalue of all terms of $H$ crossing the boundary.

\subsection{The world according to tensor networks}

Quantum many-body systems are, in general, difficult to describe:
specifying an arbitrary state of a system with $N$ two-level
subsystems requires $2^N$ complex numbers. For a classical
computer, this presents not only storage problems, but also
computational ones, since simple operations like calculating the
expectation value of an observable would require an exponential
number of operations. However, we know that completely separable
states can be described with about $N$ parameters --- indeed, they
correspond to classical states. Therefore, what makes a quantum
state difficult to describe are quantum correlations, or
entanglement. We saw already that even if in general the entropy
of a subsystem of an arbitrary state is proportional to the
volume, there are some special states which obey an entropic area
law. Intuitively, and given the close relation between entropy and
information, we could expect that states that follow an area law
can be described (at least approximately) with much less
information than a general state. We also know that such low
entanglement states are few, albeit interesting --- we only need
an efficient and practical way to describe and parametrize them
\footnote{Note, however, that an area law does not imply an
efficient classical parametrization (see, e.g.,
Ref.~\cite{eisert_new}.}.

Consider a general pure state of a system with $N$ $d$-level particles,
\begin{equation}
| \psi \rangle = \sum_{i_1,i_2,\ldots,i_N=1}^{d} c_{i_1i_2\ldots
i_N} | i_1,i_2,\ldots,i_N \rangle.
\end{equation}
When the state has no entanglement, then $c_{i_1i_2\ldots
i_N}=c^{(1)}_{i_1}c^{(2)}_{i_2}\ldots c^{(N)}_{i_N}$ where all
$c$'s are scalars. The locality of the information (the set of
coefficients $c$ for each site is independent of the others) is
key to the efficiency with which separable states can be
represented. How can we keep this locality while adding complexity
to the state, possibly in the form of correlations but only to
nearest-neighbors? As we shall see, we can do this by using a
tensor at each site of our lattice, with one index of the tensor
for every physical neighbor of the site, and another index for the
physical states of the particle. For example, in a one-dimensional
chain we would assign a matrix for each state of each particle,
and the full quantum state would be written as
\begin{equation}
| \psi \rangle = \sum_{i_1,i_2,\ldots,i_N=1}^{d} {\mathrm{Tr}} \left[
A^{[1]}_{i_1}A^{[2]}_{i_2}\ldots A^{[N]}_{i_N} \right] |
i_1,i_2,\ldots i_N \rangle, \label{mps}
\end{equation}
where $A^{[k]}_{i_k}$ stands for a matrix of dimensions $D_k
\times D_{k+1}$. A useful way of understanding the motivation for
this representation is to think of a valence bond picture
\cite{Verstraete04a}. Imagine that we replace every particle at
the lattice by a pair (or more in higher dimensions) of particles
of dimensions $D$ that are in a maximally entangled state with
their corresponding partners in a neighboring site (see Figure
\ref{FigureMPS}). Then, by applying a map from these virtual
particles into the real ones,
\begin{equation}
{\cal A}=\sum_{i=1}^d \sum_{\alpha,\beta=1}^D
A_{\alpha,\beta}^{[i]} |i\rangle\!\langle\alpha ,\beta |,
\end{equation}
we obtain a state that is expressed as Eq.\ (\ref{mps}). One can
show that any state $| \psi \rangle \in (\mathbbm{C}^{d})^{\otimes N} $ can be
written in this way with $D=\max_m D_m \le d^{N/2}$. Furthermore,
a matrix product state can always be found such that
\cite{Vidal03a}
\begin{itemize}
\item $\sum_i A^{\dagger [k]}_{i} A^{[k]}_{i} = \mathbbm{1}_{D_k}$, for $1 \le k \le N$,
\item $\sum_i A^{\dagger[k]}_{i} \Lambda^{[k-1]} A^{[k]}_{i}  = \Lambda^{[k]}$, for $1 \le k \le N$, and
\item For open boundary conditions $\Lambda^{[0]}=\Lambda^{[N]}=\mathbbm{1}$, and $\Lambda^{[k]}$ is a $D_{k+1} \times
D_{k+1}$ positive diagonal matrix, full rank, with ${\rm Tr}
\Lambda^{[k]}=1$.
\end{itemize}
In fact, $\Lambda^{[k]}$ is a matrix whose diagonal components
$\lambda^{k}_n$ $(n=1,\ldots,D_k)$ are the non-zero eigenvalues of
the reduced density matrix obtained by tracing out the particles
from $k+1$ to $N$, i.e., the Schmidt coefficients of a bipartition
of the system at site $k$. An MPS with these properties is said to
be in its canonical form \cite{Perez-Garcia07}.

\begin{figure}[t!] %  figure placement: here, top, bottom, or page
   \centering
   \includegraphics[width=0.60\textwidth]{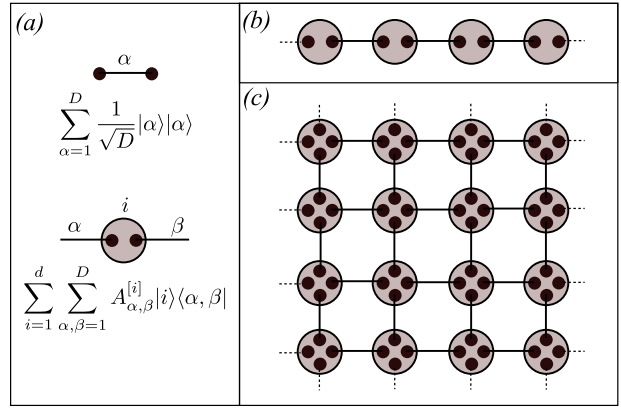}
   \caption{Schematic representation of tensor networks. In panel $(a)$ we show the meaning of the
   elements in the representation, namely the solid line joining two virtual particles in different sites
   means the maximally entangled state between them, and the grey circle represents the map
   from virtual particles in the same site to the physical index.
   In panel $(b)$  we see a one-dimensional tensor network or MPS, while in $(c)$ we show how
   the scheme can be extended intuitively to higher dimensions --- in the two-dimensional example
   shown here, a PEPS that contains four virtual particles per physical site.}
   \label{FigureMPS}
\end{figure}

Therefore, Eq.\ (\ref{mps}) is a representation of all possible
states --- still cumbersome. It becomes an efficient
representation when the virtual bond dimension $D$ is small, in
which case it is typically said that the state has an MPS
representation. In higher dimensions we talk about PEPS
\cite{Verstraete04d}. When entanglement is small (but finite),
most of the Schmidt coefficients are either zero or decay rapidly
to zero \cite{Vidal03a}. Then, if $| \psi \rangle $ contains
little entanglement, we can obtain a very good approximation to it
by truncating the matrices $A$ to a rank $D$ much smaller than the
maximum allowed by the above theorem, $d^{N/2}$. In fact, one can
demonstrate the following fact \cite{Perez-Garcia07}.

\begin{lemma}
 For any pure state $\ket{\psi}$, there exists an MPS $|\psi_D \rangle$ with the bond dimension $D$ such that
 \begin{equation}
\| | \psi \rangle - | \psi_D \rangle \|^{2} < 2
\sum_{k=1}^{N-1} \sum_{i=D+1}^{d^{\min(k, N-k)}}
\lambda^{[k]}_i.
\end{equation}
%
%\epsilon_\alpha(D)
%
%$\epsilon_\alpha(D) = \sum_{i=D+1}^{d^{\min(\alpha, N-\alpha)}}
%lambda^{[k]}_i
% \end{equation}
 \label{lemmaBound}
\end{lemma}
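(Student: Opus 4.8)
\emph{Approach.} The plan is to build $\ket{\psi_D}$ by the obvious bond-by-bond Schmidt truncation and to control the accumulated error using the canonical form of matrix product states recalled just above. First I would put $\ket{\psi}$ into that canonical form (the iterated singular value decomposition underlying Theorem \ref{SchmidtDec}), so that the diagonal matrix $\Lambda^{[k]}$ at bond $k$ carries the Schmidt coefficients of the cut $\{1,\dots,k\}\,|\,\{k{+}1,\dots,N\}$; its squared entries $\lambda_i^{[k]}$ are then precisely the eigenvalues of the reduced density matrix on the first $k$ sites. Since the smaller of the two blocks at bond $k$ has Hilbert-space dimension $d^{\min(k,N-k)}$, there are at most $d^{\min(k,N-k)}$ nonzero $\lambda_i^{[k]}$, so $\varepsilon_k^2:=\sum_{i>D}\lambda_i^{[k]}=\sum_{i=D+1}^{d^{\min(k,N-k)}}\lambda_i^{[k]}$ is the full tail discarded at that bond.

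\emph{Construction and the key structural fact.} I would then define $\ket{\psi_D}$ by setting to zero, at every bond, all but the $D$ largest Schmidt coefficients and renormalizing; in the canonical form this is just the projection $P_k$ of the virtual index of bond $k$ onto its top-$D$ block, tensored with the identity elsewhere. The decisive point -- and what makes the canonical (orthogonality-centre) conditions $\sum_i A^{\dagger[k]}_iA^{[k]}_i=\mathbbm{1}$ worth having -- is that the $P_k$ attached to different bonds are \emph{mutually compatible}: splitting every virtual index into a ``kept'' block $\{1,\dots,D\}$ and a ``discarded'' block induces an \emph{orthogonal} decomposition $\ket{\psi}=\sum_{S\subseteq\{1,\dots,N-1\}}\ket{\Phi_S}$, where $S$ records at which bonds the discarded block is taken and the canonical conditions force $\langle\Phi_S|\Phi_{S'}\rangle=0$ for $S\neq S'$. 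Writing $c_S:=\|\Phi_S\|$ one gets $\sum_S c_S^2=1$, the single-bond discarded weight is $\varepsilon_k^2=\sum_{S\ni k}c_S^2$, and $\prod_kP_k\ket{\psi}=\ket{\Phi_\emptyset}$, so that $\ket{\psi_D}=\ket{\Phi_\emptyset}/c_\emptyset$ has bond dimension at most $D$.

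\emph{The error estimate.} Given this, the bound is immediate. Both $\ket{\psi}$ and $\ket{\psi_D}$ are normalized and $\langle\psi|\psi_D\rangle=c_\emptyset\ge 0$, so
\be
\|\ket{\psi}-\ket{\psi_D}\|^2 \;=\; 2(1-c_\emptyset)\;\le\;2(1-c_\emptyset^2)\;=\;2\!\!\sum_{S\neq\emptyset}\!c_S^2\;\le\;2\sum_{k=1}^{N-1}\sum_{S\ni k}c_S^2\;=\;2\sum_{k=1}^{N-1}\varepsilon_k^2,
\ee
where the first inequality uses $0\le c_\emptyset\le 1$ (hence $1-c_\emptyset\le(1-c_\emptyset)(1+c_\emptyset)$), the third equality uses $\sum_Sc_S^2=1$, and the last step is a union bound (each nonempty $S$ is counted at least once on the right). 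The factor $2$, together with the passage $c_\emptyset\mapsto c_\emptyset^2$, is precisely the price of insisting that $\ket{\psi_D}$ be normalized: the unnormalized truncation already obeys $\|\ket{\psi}-\prod_kP_k\ket{\psi}\|^2=\sum_{S\neq\emptyset}c_S^2\le\sum_k\varepsilon_k^2$. The inequality is strict as soon as some Schmidt weight is genuinely discarded (so that $0<c_\emptyset<1$), which is the only case of interest.

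\emph{Main obstacle.} The real work is the structural claim in the second step -- the existence of the orthogonal sector decomposition $\ket{\psi}=\sum_S\ket{\Phi_S}$ with $\varepsilon_k^2=\sum_{S\ni k}c_S^2$, i.e. that truncations at different bonds do not interfere. Without it one can still truncate bond by bond and use the triangle inequality, but that only delivers $\|\ket{\psi}-\ket{\psi_D}\|\le\sum_k\varepsilon_k$, i.e. $\|\ket{\psi}-\ket{\psi_D}\|^2\le(\sum_k\varepsilon_k)^2$, which can exceed the claimed $2\sum_k\varepsilon_k^2$ by a factor growing with $N$; it is exactly the orthogonality built into the canonical form that turns this compounding estimate into an additive one. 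So the part of the argument that needs care is the bookkeeping showing that the canonical conditions (and their $\Lambda$-weighted counterpart) force $\langle\Phi_S|\Phi_{S'}\rangle=0$ for all $S\neq S'$.
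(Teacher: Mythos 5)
The paper itself does not prove this lemma (it is quoted from Ref.~\cite{Perez-Garcia07}), so your argument has to stand on its own, and unfortunately its central step does not. The gap is exactly where you yourself locate ``the real work'': the claimed mutual orthogonality $\langle\Phi_S|\Phi_{S'}\rangle=0$ of the sectors obtained by splitting \emph{every} virtual index into a kept and a discarded block is false in general. What the canonical conditions $\sum_i A^{\dagger[k]}_iA^{[k]}_i=\mathbbm{1}$ and $\sum_i A^{\dagger[k]}_i\Lambda^{[k-1]}A^{[k]}_i=\Lambda^{[k]}$ give you is orthonormality of the left (resp.\ right) Schmidt vectors at each \emph{single} cut, i.e.\ orthogonality of components that differ in the value of one bond index \emph{after all other bond indices have been summed over their full range}. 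Once you restrict a neighbouring bond to a sub-block, that cancellation is destroyed. Concretely, for $N=4$ and $S=\{2\}$, $S'=\emptyset$, the overlap $\langle\Phi_{\emptyset}|\Phi_{\{2\}}\rangle$ is a transfer-matrix contraction in which the left environment arriving at bond $2$ is $M=\sum_{i}A^{\dagger[2]}_{i}\Pi_1A^{[2]}_{i}=\mathbbm{1}-\sum_iA^{\dagger[2]}_i(\mathbbm{1}-\Pi_1)A^{[2]}_i$ rather than $\mathbbm{1}$; its off-diagonal (kept--discarded) block at bond $2$ is generically nonzero, and so is the overlap. (Your check works for GHZ, W and other symmetric states only because there the individual virtual-index components happen to be distinct product basis states.) Since the orthogonality fails, so do all three facts you derive from it: $\sum_Sc_S^2=1$, $\varepsilon_k^2=\sum_{S\ni k}c_S^2$, and $\langle\psi|\Phi_\emptyset\rangle=c_\emptyset^2$; the final chain of inequalities therefore has no support.

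The way this interference is actually handled in Refs.~\cite{Vidal03a,Perez-Garcia07} (and in Verstraete and Cirac's original proof) is sequential rather than global: one truncates the bonds one after another, each time keeping the $D$ leading Schmidt vectors of the \emph{current} state, so that each single truncation is an honest orthogonal projection whose discarded weight is exactly a tail sum of Schmidt eigenvalues of that intermediate state. The substitute for your orthogonality claim is a monotonicity statement: a truncation supported on sites $1,\dots,k$ satisfies $\tr_{1\dots k}\bigl[\mathcal{T}\proj{\psi}\mathcal{T}^{\dagger}\bigr]\le\tr_{1\dots k}\proj{\psi}$, and for $0\le\sigma\le\rho$ one has $\sum_{i>D}\mu_i(\sigma)\le\sum_{i>D}\mu_i(\rho)$; hence the weight discarded at bond $k$ of the partially truncated state is still bounded by the \emph{original} tail $\sum_{i>D}\lambda^{[k]}_i$, and the overlaps $\langle\phi_{k-1}|\phi_k\rangle=\|\phi_k\|^2$ let one assemble the losses additively into $\|\ket{\psi}-\ket{\psi_D}\|^2\le 2\sum_k\sum_{i>D}\lambda_i^{[k]}$. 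If you want to salvage your write-up, this monotonicity-plus-sequential-truncation argument is the missing ingredient; the orthogonal sector decomposition cannot be made to work as stated.
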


%\noindent{\bf Lemma 1} \cite{Perez-Garcia07}
% There exists a MPS $|\psi_D \rangle$ with bond dimension $D$ such that
%$\| | \psi \rangle - | \psi_D \rangle \|^{2} < 2
%\sum_{\alpha=1}^{N-1} \epsilon_\alpha(D)$, where
%$\epsilon_\alpha(D) = \sum_{i=D+1}^{d^{\min(\alpha, N-\alpha)}}
%\lambda^{[k]}_i$. \label{lemmaBound}

This Lemma  is most powerful in the context of
numerical simulations of quantum states: it gives a controllable
handle on the precision of the approximation by MPS. In practical
terms, for the representation to be efficient the Schmidt
coefficients $\lambda$ need to decay faster than polynomially.
However, we can be more precise and give bounds on the error of
the approximation in terms of entropies \cite{Schuch07}:

\begin{lemma}
Let $S_\alpha(\rho)=\log [\tr (\rho^\alpha)]/(1-\alpha)$ be the
R\'enyi entropy of a reduced density matrix $\rho$, with $0<
\alpha < 1$. Denote $\epsilon(D) = \sum_{i=D+1}^{\infty}
\lambda_i$, with $\lambda_i$ being the eigenvalues of $\rho$ in
nonincreasing order. Then,
\begin{equation}
\log [\epsilon(D)] \le \textstyle{\frac{1-\alpha}{\alpha}} \left[
S_\alpha(\rho) - \log \left(\textstyle{\frac{D}{1-\alpha}}\right)
\right].
\end{equation}
\end{lemma}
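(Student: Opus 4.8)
The plan is to reduce the estimate to an elementary one-variable optimisation. Write $\lambda_1\ge\lambda_2\ge\cdots\ge 0$ for the eigenvalues of $\rho$ and put $R:=\mathrm{tr}(\rho^{\alpha})=\sum_i\lambda_i^{\alpha}$, so that $\log R=(1-\alpha)S_\alpha(\rho)$ by definition of the R\'enyi entropy. We may assume $R<\infty$ (otherwise $S_\alpha=\infty$ and there is nothing to prove) and $\lambda_D>0$ (otherwise $\epsilon(D)=0$, $\log\epsilon(D)=-\infty$, and the claim is trivial).

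First I would peel a factor $\lambda_D$ off the tail. For $i>D$ one has $\lambda_i\le\lambda_D$, hence $\lambda_i^{1-\alpha}\le\lambda_D^{1-\alpha}$ because $1-\alpha>0$, so $\lambda_i=\lambda_i^{\alpha}\lambda_i^{1-\alpha}\le\lambda_D^{1-\alpha}\lambda_i^{\alpha}$; summing over $i>D$ gives $\epsilon(D)\le\lambda_D^{1-\alpha}\sum_{i>D}\lambda_i^{\alpha}$. The crucial second step is \emph{not} to bound $\sum_{i>D}\lambda_i^{\alpha}$ crudely by $R$, but to observe that the $D$ leading eigenvalues already consume at least $D\lambda_D^{\alpha}$ of the total R\'enyi weight: since $\lambda_i\ge\lambda_D$ for $i\le D$ we get $\sum_{i=1}^{D}\lambda_i^{\alpha}\ge D\lambda_D^{\alpha}$, whence $\sum_{i>D}\lambda_i^{\alpha}=R-\sum_{i\le D}\lambda_i^{\alpha}\le R-D\lambda_D^{\alpha}$. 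Combining the two estimates, $\epsilon(D)\le\lambda_D^{1-\alpha}\bigl(R-D\lambda_D^{\alpha}\bigr)$.

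Now I would set $x:=\lambda_D^{\alpha}$, so that $\lambda_D^{1-\alpha}=x^{(1-\alpha)/\alpha}$ and $0<x\le R/D$ (the upper bound holds because $R-D\lambda_D^{\alpha}\ge\sum_{i>D}\lambda_i^{\alpha}\ge0$), and the inequality becomes $\epsilon(D)\le g(x)$ with $g(x)=x^{(1-\alpha)/\alpha}(R-Dx)$. A routine differentiation shows that $g$ attains its maximum over $(0,R/D]$ at $x^{*}=(1-\alpha)R/D$ (which lies in the interval since $1-\alpha<1$), with value $g(x^{*})=\alpha\,R^{1/\alpha}\bigl((1-\alpha)/D\bigr)^{(1-\alpha)/\alpha}$, where I use $1+(1-\alpha)/\alpha=1/\alpha$. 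Hence $\epsilon(D)\le\alpha\,R^{1/\alpha}\bigl((1-\alpha)/D\bigr)^{(1-\alpha)/\alpha}$. Taking logarithms, substituting $\frac{1}{\alpha}\log R=\frac{1-\alpha}{\alpha}S_\alpha(\rho)$, and discarding the nonpositive term $\log\alpha$ (legitimate since $0<\alpha<1$) yields precisely $\log\epsilon(D)\le\frac{1-\alpha}{\alpha}\bigl[S_\alpha(\rho)-\log\frac{D}{1-\alpha}\bigr]$; in fact one obtains the slightly sharper inequality with an extra $+\log\alpha$ on the right-hand side.

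I do not expect a genuine obstacle: every step is elementary. The one point that requires care is the ``weight'' step --- the naive combination of $\sum_{i>D}\lambda_i^{\alpha}\le R$ with $\lambda_D\le(R/D)^{1/\alpha}$ only produces the weaker bound $\log\epsilon(D)\le\frac{1-\alpha}{\alpha}(S_\alpha-\log D)$, so one really has to exploit the tension between $\lambda_D$ being large and the tail carrying little R\'enyi weight, which is exactly what $g$ and its maximiser $x^{*}=(1-\alpha)R/D$ encode. The degenerate cases $\lambda_D=0$ and $S_\alpha=\infty$ are disposed of at the outset as above, and all series involved converge since $R<\infty$ forces $\sum_i\lambda_i<\infty$.
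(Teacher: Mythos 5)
Your proof is correct, and every step checks out: the splitting $\lambda_i=\lambda_i^{\alpha}\lambda_i^{1-\alpha}\le\lambda_D^{1-\alpha}\lambda_i^{\alpha}$ for the tail, the head bound $\sum_{i\le D}\lambda_i^{\alpha}\ge D\lambda_D^{\alpha}$, and the optimisation of $g(x)=x^{(1-\alpha)/\alpha}(R-Dx)$ at $x^{*}=(1-\alpha)R/D$ all hold, and the algebra $1+(1-\alpha)/\alpha=1/\alpha$ together with discarding $\log\alpha\le 0$ gives exactly the stated inequality (indeed a slightly sharper one). Note that the paper itself states this lemma without proof, simply citing Schuch \emph{et al.}; your argument is essentially the one given in that reference, and you correctly identify the one nontrivial point, namely that the crude combination of $\sum_{i>D}\lambda_i^{\alpha}\le R$ with $\lambda_D^{\alpha}\le R/D$ only yields the weaker constant $\log D$ in place of $\log\frac{D}{1-\alpha}$, so the single-variable maximisation is genuinely needed.
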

%
%\noindent{\bf Lemma 2}
%Let $S_\alpha(\rho)=\log (\tr \rho^\alpha)/(1-\alpha)$ be the
%R\'enyi entropy of a reduced density matrix $\rho$, with $0<
%\alpha < 1$. Denote $\epsilon(D) = \sum_{i=D+1}^{\infty}
%\lambda_i$, with $\lambda_i$ being the eigenvalues of $\rho$ in
%nonincreasing order. Then,
%\begin{equation}
%\log (\epsilon(D)) \le \frac{1-\alpha}{\alpha} \left(
%S_\alpha(\rho) - \log \frac{D}{1-\alpha} \right).
%\end{equation}
%
The question now is when can we find systems with relevant states
that can be written efficiently as a MPS; i.e. how broad is the
simulability of quantum states by MPS. For example, one case of
interest where we could expect the method to fail is near quantum
critical points where correlations (and entanglement) are singular
and might diverge. However, at least in 1D systems, the following fact remains true
\cite{Perez-Garcia07}.

\begin{lemma}
In one dimension there exists a scalable, efficient MPS
representation of ground states even at criticality.
\end{lemma}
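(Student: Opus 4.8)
The plan is to reduce the statement to the two preceding Lemmas — the MPS truncation bound (Lemma~\ref{lemmaBound}) and the R\'enyi-entropy estimate for the truncation weight of Ref.~\cite{Schuch07} — combined with the known logarithmic scaling of block entropies at a one-dimensional critical point (the conformal-field-theory formula recalled above, cf.\ \cite{Calabrese09}). Here \emph{efficient} is to be read as: for every target precision $\epsilon$ (which we may even allow to shrink polynomially in the chain length $N$) there is an MPS of bond dimension $D=\mathrm{poly}(N)$ within distance $\epsilon$ of the true ground state, and \emph{scalable} means this construction and its error bound are uniform in $N$.

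First I would fix a R\'enyi index $\alpha\in(0,1)$, say $\alpha=1/2$, and use the CFT prediction for the block R\'enyi entropy at criticality: for the cut at site $k$, with $\rho^{[k]}=\tr_{k+1,\dots,N}\proj{\psi}$, one has
\begin{equation}
S_\alpha(\rho^{[k]})\le \frac{c}{6}\Bigl(1+\frac{1}{\alpha}\Bigr)\log_2\!\bigl(\min(k,N-k)/a\bigr)+O(1)\le c'\log_2 N+O(1),
\end{equation}
where $c$ is the central charge, $a$ the lattice cutoff, and $c'$ a constant depending only on $c$ and $\alpha$ (for open chains the relevant region at the $k$-th cut is $\{1,\dots,k\}$ or its complement, whence the $\min$). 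I would then feed this into the tail-weight bound of Ref.~\cite{Schuch07} applied to each $\rho^{[k]}$: with $\epsilon_k(D)=\sum_{i>D}\lambda_i^{[k]}$,
\begin{equation}
\log_2\epsilon_k(D)\le \frac{1-\alpha}{\alpha}\bigl[S_\alpha(\rho^{[k]})-\log_2(D/(1-\alpha))\bigr],
\end{equation}
which together with the entropy bound yields $\epsilon_k(D)\le \kappa\,N^{c'(1-\alpha)/\alpha}\,D^{-(1-\alpha)/\alpha}$ for some constant $\kappa$, uniformly in $k$.

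Next I would invoke Lemma~\ref{lemmaBound}: the MPS $|\psi_D\rangle$ obtained by truncating each Schmidt decomposition to rank $D$ obeys $\| |\psi\rangle-|\psi_D\rangle\|^2<2\sum_{k=1}^{N-1}\epsilon_k(D)\le 2\kappa\,N^{1+c'(1-\alpha)/\alpha}\,D^{-(1-\alpha)/\alpha}$. Demanding the right-hand side be at most $\epsilon^2$ and solving gives $D\ge\bigl(2\kappa\,N^{1+c'(1-\alpha)/\alpha}/\epsilon^2\bigr)^{\alpha/(1-\alpha)}$, which is polynomial in $N$ and in $1/\epsilon$ for the fixed choice of $\alpha$ (with $\alpha=1/2$ one gets the clean statement $D=O(N^{1+c/2}/\epsilon^2)$). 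Hence a polynomially sized MPS reproduces the critical ground state to any prescribed accuracy — precisely the asserted scalable, efficient representation; note that away from criticality the identical argument with a constant (rather than logarithmic) R\'enyi-entropy bound returns an $N$-independent $D$, recovering the gapped case.

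The main obstacle is the input used in the first step: a rigorous, model-independent proof that $S_\alpha(\rho^{[k]})=O(\log N)$ at criticality is not available, since it rests on the CFT description of the critical point, which is itself not a theorem in full generality. In practice one either takes this scaling as a hypothesis (as is implicit in the statement) or verifies it directly for the solvable critical chains — XX, the transverse-field Ising model, and more generally free-fermion or Jordan–Wigner-solvable Hamiltonians — where the entanglement spectrum is known explicitly. A secondary subtlety worth flagging is that one genuinely needs a R\'enyi entropy with $\alpha<1$: the von Neumann bound alone ($\alpha\to1$) degenerates in the estimate above, so the mere polynomial decay of the Schmidt coefficients is not quite enough; what is used is the \emph{uniform} polynomial tail that the $\alpha<1$ scaling provides, after optimizing the harmless trade-off between the decay exponent $(1-\alpha)/\alpha$ and the prefactor $c'$ over $\alpha\in(0,1)$.
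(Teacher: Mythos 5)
Your argument is correct and is precisely the route the paper intends: the text states this result as a fact imported from Ref.~\cite{Perez-Garcia07}, having set up exactly your two ingredients (the truncation bound of Lemma~\ref{lemmaBound} and the R\'enyi-entropy tail estimate of Ref.~\cite{Schuch07}) immediately beforehand, to be combined with the logarithmic scaling of critical block entropies. You also correctly identify the two genuine caveats — that the $O(\log N)$ R\'enyi scaling with $\alpha<1$ is a CFT input rather than a general theorem, and that the von Neumann entropy alone would not suffice — which is exactly the content of Ref.~\cite{Schuch07}.
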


%\textbf{ARE WE SURE ABOUT THIS? IN 1D SYSTEMS, THERE ARE PROBLEMS
%THAT ARE QMA, SEE PAPERS BY GOTTESMAN AND CO-WORKERS. OK, THIS MAY
%JUST MEAN THAT WHILE AN EFFICIENT APPROXIMATION EXISTS, THERE MAY
%BE NO EFFICIENT ALGORITHM TO FIND IT.} \added{THIS IS EXPLAINED IN [96] ON PAGE 23 AFTER THEOREM 17.}

%\noindent{\bf Lemma 3} \cite{Perez-Garcia07}
%In one dimension there exists a scalable, efficient MPS
%6representation of ground states even at criticality.

\section{Nonlocality in many body systems}
\label{Nonlocality}

Let us now turn to nonlocality in many-body systems. We start by
explaining what the concept of nonlocality means, using the
contemporary language of device independent quantum information
processing (DIQIP). Recent successful hacking attacks on quantum
cryptographic devices stimulated a novel approach to quantum
information theory in which protocols are defined independently of
the inner working of the devices used in the implementation, hence
the term DIQIP.

\subsection{Probabilities and correlations -- DIQIP approach}

\begin{figure}[h!]
\begin{center}
\includegraphics[width=0.35\textwidth]{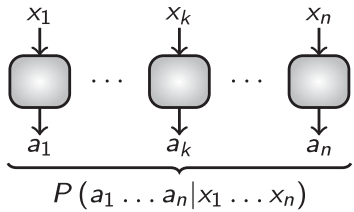}
\end{center}
\caption{Schematic picture of the DIQIP approach.}\label{fig:Bell1}
\end{figure}

The idea of DIQIP is at best explained with the graphical scheme presented on Fig.
\ref{fig:Bell1}. We consider here the following scenario, usually
referred to as the \textit{$(n,m,d)$ scenario}. Let us consider
$n$ spatially separated parties $A_1,\ldots,A_n$ and imagine that
each of them possesses a black box with $m$ buttons representing
the measurement choices (or observables) and $d$ lights
representing the measurement outcomes. Now, in each round of the
experiment every party is allowed to press one of the buttons
causing one of the lights to shine. The only information
accessible in such an experiment is contained in a set of $(md)^n$
conditional probabilities $P(a_1,\ldots, a_n|x_1,\ldots, x_n)$ of
obtaining outputs $a_1, a_2, \ldots, a_n$, provided observables
$x_1, x_2, \ldots, x_n$ were measured. In what follows we enumerate the measurements and outcomes as
$x_i=1,\ldots,m$ and $a_i=0,\ldots,d-1$, respectively.

%
%We have $n$ spatially separated parties, each of the parties has its own black box with $m$ buttons, that %correspond to the choice of the measurement (observable). Each of the chosen measurements has $d$ possible outputs. %The only information accessible is contained in conditional probabilities $P(a_1,\ldots,a_n|x_1,\ldots,x_n)$ of %obtaining outputs $a_1, a_2, \ldots, a_n$, provided observables $x_1, x_2, \ldots, x_n$ were measured.
%

The set of all such probability distributions is convex as by mixing any two of them one obtains another probability distribution; in fact, it is a polytope. From the physical point of view (causality, special relativity) the probabilities must fulfil the {\it non-signalling conditions}, i.e., the choice of measurement by the $k$-th party, cannot be signalled to the others. Mathematically it means that for any $k=1,\ldots,n$, the following condition
\begin{eqnarray}
&\sum_{a_k}P(a_1, a_2, \ldots,  a_k, \ldots, a_n|x_1, x_2, \ldots, x_k,\ldots, x_n)\nonumber\\
& = P(a_1, a_2, \ldots,  a_{k-1}, a_{k+1}\ldots, a_n|x_1, x_2, \ldots, x_{k-1},x_{k+1}\ldots, x_n),
\end{eqnarray}
is fulfilled. In other words, the marginal probability distribution describing correlations seen by
the $n$ parties except the $k$th one is independent of $x_k$. We call correlations satisfying the above constraints \textit{nonsignalling correlations}. It is easy to see that they also form a polytope.
Let us also notice that the above conditions together with normalization
clearly reduce the number of independent probabilities. For instance, in the simplest $(2,2,2)$
scenario there are eight independent probabilities out of sixteen and they can be chosen as
$P(0,0|x_1,x_2)$, $P_A(0|x_1)$, and $P_B(0|x_2)$ with $x_1,x_2=1,2$.

The \textit{local} or \textit{classical correlations} are defined via the concept of a local hidden variable $\lambda$. Imagine that the only resource shared by the parties is some classical information $\lambda$ (called also LHV) distributed among them with probability $q_{\lambda}$.
The correlations that the parties are able to establish in such case are
of the form
\begin{equation}
P(a_1,  \ldots, a_n|x_1,  \ldots,  x_n) = \sum_{\lambda} q_\lambda D(a_1|x_1,\lambda)  \ldots D(a_n|x_n,\lambda),
\end{equation}
where $D(a_k|x_k,\lambda)$ are deterministic probabilities, i.e., for any $\lambda$, $D(a_k|x_k,\lambda)$ equals one for some outcome, and zero for
all others. What is important in this expression is that measurements of different parties are independent, so that the probability is a product of
terms corresponding to different parties.

Classical correlations form a convex set which is also a polytope, denoted $\mathbbm{P}$ (cf. Fig. \ref{fig:Bell1}). Its extremal points (or vertices) are the above form, i.e., $\prod_{i=1}^n D(a_i|x_i,\lambda)$
with fixed $\lambda$.
%
%This set is thus a polytope and is represented in Fig. \ref{fig:Bell1}.
%
The famous theorem of John Bell states that the quantum-mechanical
probabilities, which also form a convex set $\mathcal{Q}$, may stick out of the
classical polytope \cite{Bell64}. The quantum probabilities are given by the
trace formula for the set of local measurements
\begin{equation}
P(a_1, \ldots, a_n | x_1, \ldots, x_n)={\rm tr}(\rho M_{a_1}^{x_1}\otimes \cdots \otimes M_{a_n}^{x_n}),
\end{equation}
where $\rho$ is some $n$-partite state and $M_{a_i}^{x_i}$
denote the measurement operators, meaning that $M_{a_i}^{x_i}\geq 0$ for any $a_i$, $x_i$ and
$i$, and
%
%with von Neumann measurements (positively defined operators) $M_{a_i}^{x_i}$ fulfilling
%
\begin{equation}
\sum_{a_i}M_{a_i}^{x_i}=\mathbbm{1},
%
%\qquad M_{a'_i}^{x_i}M_{a_i}^{x_i}=\delta_{a'_i,a_i} M_{a_i}^{x_i}.
\end{equation}
for any choice of the measurement $x_i$ and party $i$. As we do not impose any constraint on the local dimension, we can always choose the measurements to be projective, i.e., the measurement operators additionally satisfy
$M_{a'_i}^{x_i}M_{a_i}^{x_i}=\delta_{a'_i,a_i} M_{a_i}^{x_i}$.

%
%One can also neglect the second condition above, and consider
%generalized measurements, requiring only that $M_{a_i}^{x_i}$ sup
%up to unity, as dimension is not fixed.

\begin{figure}[t!]
\begin{center}
\includegraphics[width=0.7\textwidth]{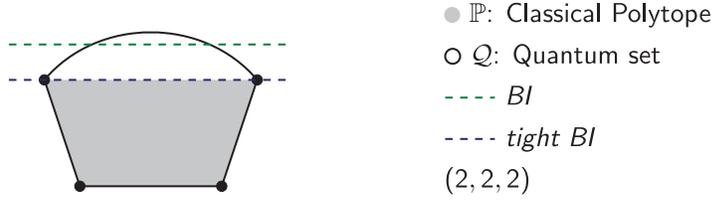}
\end{center}
\caption{Schematic representation of the sets of correlations: classical (grey area) and quantum (the area bounded by the thick line). Clearly, the latter is a superset of the former and it was shown by Bell \cite{Bell64} that they are not equal, i.e., there are quantum correlations that do not fall into the grey area. Then, the black dots represent the vertices of the classical polytope $\mathbbm{P}$ -- deterministic classical correlations. Finally, the dashed lines represent Bell inequalities, in particular, the one represented by the black line is tight, that is, it corresponds to the facet of the classical set.}\label{fig:zbiory}
\end{figure}

The concept of the Bell inequalities is explained in Fig.
\ref{fig:zbiory}. Any hyperplane in the space of probabilities that
separates the classical polytope from the rest is a Bell inequality:
everything, which is above the upper horizontal dashed line is
obviously nonlocal. But, the most useful are the {\it tight Bell
inequalities}, which correspond to the facets of the classical
polytope, i.e. its walls of maximal dimensions (lower horizontal
dashed line). To be more illustrative, let us now present a
particular example of a Bell inequality.
To this end, let us consider the simplest $(2,2,2)$
scenario consisting of two parties, each
measuring a pair of two-outcome observables. The only
nontrivial tight Bell inequality in this scenario --- the CHSH Bell inequality \cite{CHSH} ---
can be written in the ``probability'' form as
\begin{equation}\label{CHSHprob}
\sum_{a_1,a_2=0}^1\sum_{x_1,x_2=1}^{2}P(a_1\oplus a_2=(x_1-1)(x_2-1)|x_1,x_2)\leq 3,
%
%$$-1\leq P(0,0|0,0)+P(0,0|0,1)+P(0,0|1,0)-P(0,0|1,1)-P_A(0|0)-P_B(0|0)\leq 0,$$
%
\end{equation}
where $\oplus$ stands for addition modulo two.
%In the figure we present an example of one of the
%most famous Bell inequalities, the CHSH inequality. \textbf{I
%LEAVE THIS PART TO YOU, REMIK, AS IT DEPENDS ON THE FINAL FIGURE.}
%This inequality concerns the $(2,2,2)$ scenario and one can label
%outputs simply as 0 and 1.

%Due to the normalization (sum over all
%outcomes must sum up to 1) and non-signalling property, in this
%case we can parametrize all probabilities by only 8 probabilities
%listed in Fig. \ref{Bell-prob}. The CHSH inequality expressed in
%terms of these probabilities has a a simple form, as shown in the
%figure.

%\begin{figure}
%\begin{center}
%\includegraphics[width=0.7\textwidth]{bell-prob2.pdf}
%\end{center}
%\caption{Schematic picture the Bell inequalities  written in terms of probabilities.}
%\label{Bell-prob}
%\end{figure}

Let us notice that in the case when all measurements have only two outcomes, i.e., $d=2$,
correlations can be equivalently expressed via expectation values
\begin{equation}
\left\langle \mathcal{M}_{x_{i_1}}^{(i_1)}\ldots \mathcal{M}_{x_{i_k}}^{(i_k)}\right\rangle
\end{equation}
with $x_{i_1},\ldots,x_{i_k}=1,\ldots,m$, $i_1<\ldots<i_k=1,\ldots,n$
and $k=1,\ldots,n$. Here, $\mathcal{M}^{(i)}_{x_i}$ denote observables with outcomes
labelled by $a_i=\pm 1$; in particular, in the quantum-mechanical case
$\mathcal{M}_{x_i}^{(i)}=M_{x_i}^{+1}-M_{x_i}^{-1}$. Both representations
are related to each other through the following set of formulas
%
%\begin{equation}
$$p(a_1,\ldots,a_n|x_1,\ldots,x_n)=\frac{1}{2^n}
\left(1+\sum_{k=1}^n\sum_{1\leq i_1<\ldots <i_k\leq n}a_{i_1}\cdot\ldots\cdot a_{i_k}\langle \mathcal{M}_{x_{i_1}}^{(i_1)}\ldots \mathcal{M}_{x_{i_k}}^{(i_k)}\rangle\right).$$
%\end{equation}
The advantage about the ``correlator'' picture is that the non-signalling
conditions are already incorporated in it. On the other hand, the correlators must
satisfy a set of inequalities corresponding to the non-negativity conditions of
probabilities $p(a_1,\ldots,a_n|x_1,\ldots,x_n)\geq 0$.

To illustrate the ``correlator representation'', let us
consider again the simplest $(2,2,2)$ scenario. The eight independent conditional
probabilities fully describing correlations in this scenario are equivalent to
eight expectation values $\langle \mathcal{M}_{x_1}^{(1)}\mathcal{M}_{x_2}^{(2)}\rangle$,
$\langle \mathcal{M}_{x_1}^{(1)}\rangle$, and $\langle \mathcal{M}_{x_2}^{(2)}\rangle$
with $x_1,x_2=1,2$. Also, the CHSH Bell inequality (\ref{CHSHprob})
can be rewritten in its ``more standard'' form as
\begin{equation}
|\langle \mathcal{M}_{1}^{(1)}\mathcal{M}_{1}^{(2)} \rangle+\langle \mathcal{M}_{1}^{(1)}\mathcal{M}_{2}^{(2)}\rangle +\langle \mathcal{M}_{2}^{(1)}\mathcal{M}_{1}^{(2)}\rangle-\langle \mathcal{M}_{2}^{(1)}\mathcal{M}_{2}^{(2)} \rangle |\leq 2.
\end{equation}

%Alternatively, we can express everything in term of one body and
%two body correlators (there are no more than two parties). In this
%case it is more natural to think about outcomes as $\pm 1$. Again,
%we need only 8 correlators to parametrize them all, and the CHSH
%inequality takes in this representation particularly simple form,
%as explained in Fig. \ref{Bell-corr} . Note, that if we had 3
%parties, in general the Bell inequalities would then involve three
%body correlators, and so on.

%\begin{figure}
%\begin{center}
%\includegraphics[width=0.7\textwidth]{bell-corr2.pdf}
%\end{center}
%\caption{Schematic picture the Bell inequalities  written in terms of correlators.}
%\label{Bell-corr}
%\end{figure}

From now on we concentrate on the $(n,2,2)$ scenario (two two-outcome measurements).
The complexity of characterizing the corresponding classical
polytope is enormous. It is fairly easy to see that the number of its
vertices (extremal points) is equal to 2 to the number of all
possible choice of parties and observables, i.e., $2^{2n}$, so it
grows exponentially with $n$. The dimension of the space
of probabilities is the number of choices of measurements
by each party, which is $2+1$, since each party has at their
disposal $2$ observables or it may not measure anything, to the
power $n$. One then has to subtract 1 from this result, since if all
parties do not measure, the result is trivial. Clearly, the resulting dimension
$3^n-1$ grows exponentially with the number of parties. It is then
not surprising at all that the problem of characterization of the
classical polytope is, depending on formulation, NP-complete or
NP-hard.  Already for few parties finding all Bell inequalities is
an impossible task.
%
%This is illustrated in Fig. \ref{estim}.

%\begin{figure}
%\begin{center}
%\includegraphics[width=0.7\textwidth]{bell-many2.pdf}
%\end{center}
%\caption{Bell inequalities for many body systems: estimations of complexity.}
%\label{estim}
%\end{figure}

\subsection{Detecting non-locality in many body systems with two-body correlators}

Clearly, if we want to find Bell inequalities for many-body
systems, we need some simplifications. This was the idea behind
the recent papers \cite{Science,TIpaper}, which focus on Bell
inequalities involving one- and two-body correlators. In what
follows we will refer to such Bell inequalities as
\textit{two-body Bell inequalities}. Notice in
passing that several criteria allowing for entanglement detection
in many-body systems from such quantities are already
known (see, e.g., Refs. \cite{Toth05,Korbicz05,Toth07,Cramer11,Laskowski13}).

Restricting the study to low-order correlations reduces the
dimension of the space and, thus, may simplify the problem of finding
non-trivial Bell inequalities.
However,
%
%the problem of determining Bell inequalities involving
%only one- and two-body expectation values
%
it is not as simple as it
sounds. First, one wants these Bell inequalities to be valid for any number of
parties which, due to the fact that the complexity of the set of
classical correlations grows exponentially with $n$, usually
appears to be a very difficult task. Second, one wants such Bell
inequalities to be useful, that is, to be capable of revealing
nonlocality in some physically interesting states. However,
intuitively, most of the information about correlations in the
system are contained in high-order correlators, i.e., those
involving many observers, and so Bell inequalities based on them
are expected to be better at detecting nonlocality. All this makes
the task of finding Bell inequalities from two-body correlators
extremely difficult. It should be stressed in passing that as
proven in Refs. \cite{BSV,Wiesniak,Lars} all-partite correlations
are not necessary to detect multipartite nonlocality.

%When tackling this problem one has to face the following technical
%challenges. First of all, finding Bell inequalities that are valid
%for an arbitrary number of parties is usually a difficult task, as
%the complexity of characterizing the set of classical correlations
%scales exponentially with the number of parties. Second, one can
%expect that high-order correlations, that is, correlations among
%many of the measurements by the observers, carry most of the
%information about the correlations in the system, and,
%consequently, Bell inequalities based on them are the strongest
%ones (see, e.g., \cite{Sliwa,JD1}).

%Whereas Bell inequalities
%involving correlations among all but one parties have been
%constructed in \cite{BSV,Wiesniak,Lars}, thus proving that
%all--partite correlations are not necessary to reveal nonlocality,
%here we have to consider the more demanding question of whether
%nonlocality detection is possible for systems of an arbitrary
%number of parties from the minimal information achievable in a
%Bell test, i.e., two--body correlations.

Recently, a positive answer to the above question has been given in Ref. \cite{Science}
by proposing classes of Bell inequalities constructed from one- and two-body
expectation values, and, more
importantly, showing examples of physically relevant many-body
quantum states (i.e. ground states or low energy states of
physically relevant Hamiltonians) violating these inequalities.
Notice that finding and classifying such states is an interesting
task in itself, especially in a view of the fact that many
genuinely entangled quantum many-body states have two-body reduced
density matrices (or in other words covariance matrices) that
correspond to two-body reduced density matrices of some separable
state. This is the case of the so-called {\it graph states}, as
demonstrated in Ref. \cite{Gittsovich}; obviously one cannot
detect entanglement of such states with two-body correlators, not
even mentioning nonlocality.

Let us now briefly describe the way the two-body Bell inequalities were found in \cite{Science}.
First, by neglecting correlators of order larger than two one projects the polytope onto much smaller one $\mathbbm{P}_2$ spanned by two-body and one-body correlations functions.
%
%    $(n,2,2) \longrightarrow \mathbbm{P} \longrightarrow {\rm neglect\ high\ order\ correlators} \longrightarrow %\mathbbm{P}_2.$
%
    In this way we have achieved a severe reduction of the dimension of the polytope: $\mathrm{dim}\, \mathbbm{P}=3^n-1 \longrightarrow \mathrm{dim}\, \mathbbm{P}_2=2n^2$. Now, the general form of Bell inequalities is
\begin{eqnarray} \label{Raimat}
&\sum_{i=1}^{n}(\alpha_i\langle \mathcal{M}^{(i)}_0\rangle +\beta_i\langle\mathcal{M}^{(i)}_1\rangle)+\sum_{i<j}^{n}\gamma_{ij}
\langle \mathcal{M}_{0}^{(i)}\mathcal{M}_{0}^{(j)}\rangle+\sum_{i\neq j}^{n}\delta_{ij} \langle \mathcal{M}_{0}^{(i)}\mathcal{M}_{1}^{(j)}\rangle\nonumber \\
 &\sum_{i<j}^{n}\varepsilon_{ij} \langle \mathcal{M}_{1}^{(i)}\mathcal{M}_{1}^{(j)}\rangle+ \beta_C\geq 0,
\end{eqnarray}
where for convenience we wrote them down using expectation values instead of probabilities; recall that in the case of all observables having two outcomes both representations are equivalent.

For "interesting" $n$ the inequalities (\ref{Raimat}) still contain too many coefficients; in fact, the dimension of the corresponding polytope grows quadratically with $n$ (for, say, $n=100$, $\dim\mathbbm{P}_2=20000$, i.e., it is still too large). To further simplify the problem one can demand that the Bell inequalities under study obey some symmetries. In particular, in Refs. \cite{Science} and \cite{TIpaper} Bell inequalities obeying
permutational and translational invariance have been considered. In what follows we discuss in more detail the results of Ref. \cite{Science}.

 \subsection{Permutational Invariance}

Let us now restrict our attention to two-body Bell inequalities
that are invariant under a permutation of any two parties. It is
fairly easy to see that their general form reads
\begin{equation}
I:=\alpha \mathcal{S}_0+\beta \mathcal{S}_1+\frac{\gamma}{2}\mathcal{S}_{00}+\delta\mathcal{S}_{01} +\frac{\varepsilon}{2}\mathcal{S}_{11}\geq -\beta_{C}
\end{equation}
    where
\begin{equation}
\mathcal{S}_k=\sum_{i=1}^{n}\langle \mathcal{M}_{k}^{(i)}\rangle, \qquad \mathcal{S}_{kl}=\sum_{i\neq j=1}^{n}\langle \mathcal{M}_{k}^{(i)}\mathcal{M}_{l}^{(j)}\rangle,
\end{equation}
with $k,l=0,1$ are the symmetrized one- and two-body expectation values, respectively.
Geometrically, we have mapped the two-body polytope $\mathbbm{P}_2$
to a simpler one $\mathbbm{P}_2^S$ whose elements are five-tuples $(\mathcal{S}_0,\mathcal{S}_1,\mathcal{S}_{00},\mathcal{S}_{01},\mathcal{S}_{11})$ consisting of the symmetrized expectation values.
Obviously, by doing this projection $\mathbbm{P}_2 \longrightarrow\mathbbm{P}_2^S$
one is able to limit the dimension of the local polytope to 5 and, more importantly,
this number is independent of the number of parties.
%dimensions: $2n^2   \rightarrow  5 .$
Still, the number of vertices of the projected polytope is $2(n^2+1)$, i.e.,
it scales quadratically with $n$, so the characterization of all permutationally invariant two-body Bell inequalities is not trivial at all.

Nevertheless, the following three-parameter classes of
permutationally invariant two-body Bell inequalities was found in
Ref. \cite{Science}
%
%\subsection{Fully symmetric Bell Inequalities}
%Fully symmetric Bell inequalities have the following explicit form:
%
\begin{equation}\label{Rioja}
%I_{\pm}:=
%
x[\sigma\mu\pm(x+y)] \, \mathcal{S}_0+\mu y \, \mathcal{S}_1+\frac{x^2}{2} \, \mathcal{S}_{00}+\sigma xy \,\mathcal{S}_{01} +\frac{y^2}{2} \, \mathcal{S}_{11}\geq -\beta_{C},
\end{equation}
where $x,y \in \mathbbm{N}$, $\sigma = \pm 1$
and $\mu \in \mathbbm{Z}$ with opposite parity to $\varepsilon$ ($\gamma$) for odd (even) $n$.
The classical bound is in this case $\beta_{C}^{(\pm)} = (1/2)[n(x+y)^2+(\sigma\mu\pm x)^2-1],$
and it grows linearly with $n$.

\subsection{Symmetric two-body Bell inequalities: example}

Here we consider an exemplary Bell inequality belonging to the class (\ref{Rioja}):
\begin{equation}\label{Murcia}
-2\mathcal{S}_0+\frac{1}{2}\mathcal{S}_{00}-\mathcal{S}_{01}+\frac{1}{2}\mathcal{S}_{11}+2n\geq 0,
\end{equation}
where we have substituted $x=y=-\sigma=1$ and $\mu=0$.

Now, to see whether this Bell inequality is violated by some
quantum states, let us assume that all parties measure the same
pair of observables $\mathcal{M}_0^{(i)} =  \mathcal{M}_0 =
\sigma_z$ and $\mathcal{M}_1^{(i)} =  \mathcal{M}_1 = \cos\theta
\sigma_z + \sin\theta \sigma_x$ with $i=1,\ldots,N$. Fig.
\ref{fig:Sci1} shows two plots of the ratio $Q_v/\beta_C$ of the maximal
quantum violation $Q_v$ of this inequality with the above settings and
the classical bound $\beta_C=2n$. In the left plot we show the dependence on
$n$. The relative violation remains significant (of order 1) for
$n$ of order of $10^4$, and seems to grow or to saturate at large
$n$. In the right plot we show maximal violation as function of
the angle $\theta$ that defines the second observable. Again, the
maximal violation remains significant for a large set of angles
close to the optimal one.
\begin{figure}
\centering    \includegraphics[width=0.45\columnwidth]{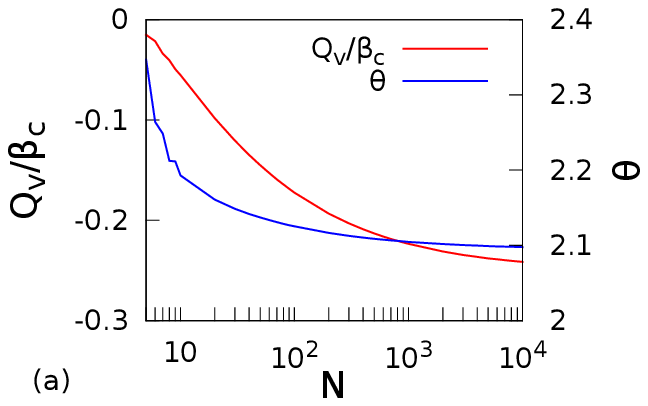}
    \includegraphics[width=0.35\columnwidth]{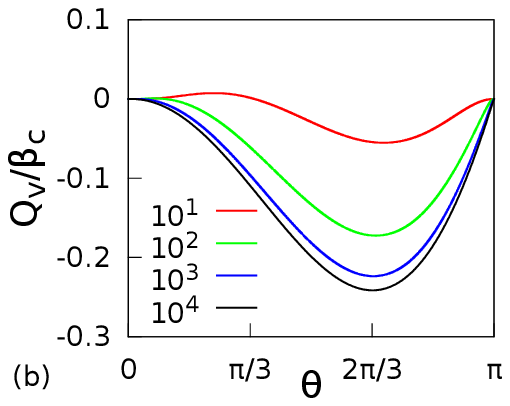}
\label{fig:Sci1}
\caption{Plot (a) presents the ratio $Q_v/\beta_C$ of the maximal violation of the inequality (\ref{Murcia}) and its classical bound $\beta_C=2n$ along with $\theta$ realizing this maximal violation, both as function of the number of particles $n$. Plob (b) shows the robustness of the violation under misalignments of the second measurement defined by the angle $\theta$ for four values of $n$. }
    \end{figure}
%
%\begin{figure}
%\centering
%   \includegraphics[trim=3 2 16 7,clip,width=0.45\columnwidth]{plot2.pdf}
%   \hskip 0.7cm
%   \includegraphics[trim=3 2.5 11 7,clip,width=0.45\columnwidth,height=3.02cm]{curvesqv2.pdf}
%\label{spinbell}
%   \end{figure}

  \subsection{Many-body symmetric states}

The next question to answer is what are the state that violated the two-body Bell inequalities, and which states can be detected by measuring these inequalities? To this aim we considered the Lipkin-Meshkov-Glick Hamiltonian \cite{LMG}, which is commonly used in nuclear  physics, and more recently in trapped atoms and trapped ions physics
    \begin{equation}
    H=-\frac{\lambda}{n}\sum_{i,j=1, i<j}^{n}\left(\sigma_x^{(i)} \sigma_x^ { (j) } + \sigma_y^{(i)}\sigma_y^{(j)}\right)-h\sum_{i=1}^{n}\sigma_z^{(i)}.
    \end{equation}
    Its ground state is the famous Dicke state \cite{Dicke}: for $n$ even it is a symetric combination of all states with exaclty $n/2$ zeros (spins down),
    $|{D_n^{n/2}}\rangle=\mathcal{S}(|{\{0,n/2\},\{1,n/2\}}\rangle)$
     But , for $n$ odd, we have a doubly degenerate ground state
    $|D_n^{[ n/2]}\rangle$ or  $|D_n^{[n/2]+1}\rangle$, for which integer part of $n/2$ or integer part of $n/2$ plus one spin are down.

It was shown in Ref. \cite{Science} that the nonlocality of these states can be revealed with the aid of the
following Bell inequalities
%
%For this case we considered Bell inequalities
%
\begin{equation}\label{DickeBell}
%
%I:=
%
\alpha_n \mathcal{S}_0+\beta_n \mathcal{S}_1+\frac{\gamma_n}{2}\mathcal{S}_{00}+\delta_n\mathcal{S}_{01} +\frac{\varepsilon_n}{2}\mathcal{S}_{11}\geq -\beta_{C}^n,
\end{equation}
    with $\alpha_n = n(n-1)(\lceil n/2\rceil-n/2) = n\, \beta_n$, $\gamma_n = n(n-1)/2$, $\delta_n = n/2$,  $\varepsilon_n = -1, $ and the classical bound is found to be
    $\beta_C^n=(1/2)n(n-1)\left\lceil(n+2)/2\right\rceil$.
    Again the observables are taken as
    $\mathcal{M}_0^{(i)} =  \mathcal{M}_0 = \sigma_z$, $\mathcal{M}_1^{(i)} =  \mathcal{M}_1 = \cos\theta \sigma_z + \sin\theta \sigma_x.$
    The results are presented in Fig. \ref{Lipkinviol}. We see that (\ref{DickeBell}) is violated by the ground state of the LMG Hamiltonian, and that the violation is not so large, but significant; this time it actually decreases slowly with $n$.
    \begin{figure}
    \centering
    \includegraphics[width=0.3\columnwidth]{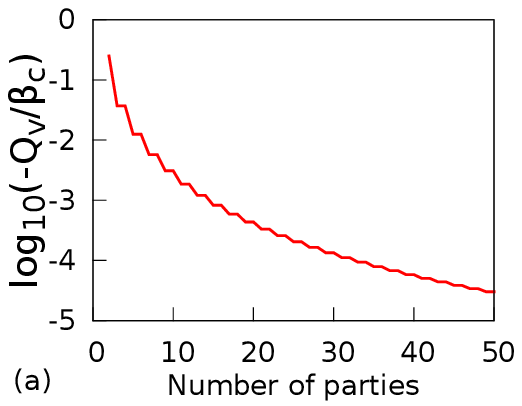}
    \hskip 1cm
    \includegraphics[width=0.3\columnwidth]{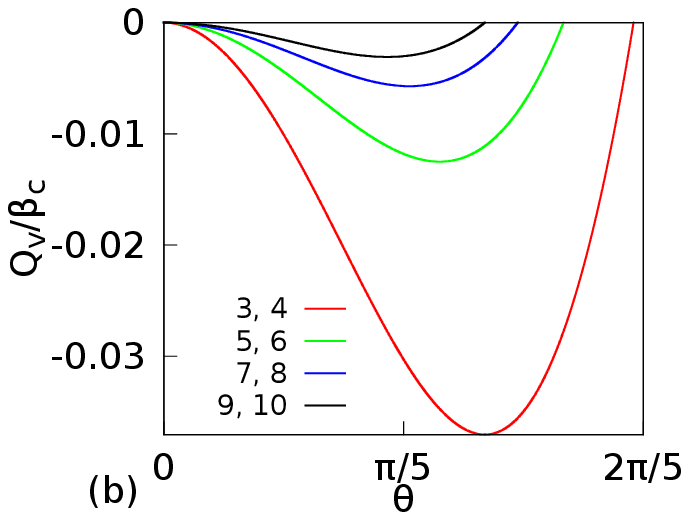}
    \label{Lipkinviol}
    \caption{Plot (a) presents the violation of the Bell inequality (\ref{DickeBell}) by the Dicke states as a function of $n$. Plot (b) presents the violation as a function of the angle defining the second measurement $\mathcal{M}_1$.}
    \end{figure}

At this point it is worth mentioning that the detection of
nonlocality in this case can be realized by measuring the total spin
components and their fluctuations: these quantities can be
measured with a great precision in current experiments with cold
atoms and ions, for instance using the spin polarization
spectroscopy. Indeed, the considered Bell inequality requires
measurements of $\mathcal{S}_0 = 2\langle S_z\rangle$,
$\mathcal{S}_1 = 2\langle{\bf m}\cdot{\bf S}\rangle$,
$\mathcal{S}_{00} = 4\langle S_z^2\rangle-n$, $\mathcal{S}_{11} =
4\langle({\bf m}\cdot {\bf S})^2\rangle - n$, and $\mathcal{S}_{01} =
(1/4)[\langle(S_z+{\bf m}\cdot{\bf S})^2\rangle-\langle
(S_z-{\bf m}\cdot{\bf S})^2\rangle$, where ${\bf m}$ is a unit vector
determining the spin direction in the second measurement.

It is worth mentioning that in the second paper \cite{TIpaper}, the
more complex case of translational invariance was also considered
-- to this aim the parties were enumerated as if they were located
in a 1D chain with periodic boundary conditions (a 1D ring). Recall that in this case the
general form of a Bell inequality is
\begin{equation}
\alpha \mathcal{S}_0+\beta \mathcal{S}_1+\sum_{k=1}^{\lfloor n/2\rfloor}\left(\gamma_k\mathcal{T}_{00}^{(k)}+\epsilon_k\mathcal{T}_{11}^{(k)}\right)+
\sum_{k=1}^{N-1}\omega_k\mathcal{T}_{01}^{(k)}\geq -\beta_C,
\end{equation}
where $\mathcal{S}_i$ $(i=0,1)$ are defined as before and $\mathcal{T}$'s are translationally invariant two-body correlators given by
\begin{equation}
\mathcal{T}_{ij}^{(k)}=\sum_{m=1}^{n}\left\langle \mathcal{M}_i^{(m)}\mathcal{M}_j^{(m+k)}\right\rangle\qquad (i\leq j=0,1)
\end{equation}
with $k=1,\ldots,\lfloor n/2\rfloor$ for $i=j$ and $k=1,\ldots,n-1$ for $i< j$.
The number of coefficients is now of order of 3$n$, so the problem becomes
intractable for $n$ large. We have, nevertheless found and
classified all tight Bell inequalities for 3 and 4 parties, and
provided some examples of five-party Bell inequalities that involve
correlators between next neighbours only.

\section{Conclusions}
\label{Conclusions}

Let us conclude by listing several experimental setups in which
nonlocality in many-body systems may be tested using the two-body
Bell inequalities:

\begin{itemize}

\item {\bf Ultracold trapped atoms.} Dicke states have been recently created in spinor Bose-Einstein condensates
(BEC) of Rubidium $F=1$ atoms, via the parametric process of spin
changing collisions, in which two $m_F=0$ atoms collide to produce a
$m_F=\pm 1$ pair  \cite{Klempt}.   These recent experiments
demonstrate the production of many thousands of neutral atoms
entangled in their spin degrees of freedom. Dicke-like states can
be produced in this way, with at least 28-particle genuine
multi-party entanglement and  a generalized squeezing parameter of
$−11.4(5)$ dB. Similarly, Rubidium atoms of pseudo-spin 1/2 in
BEC may be employed to generate scalable squeezed states (for the
early theory proposal see \cite{spin-sq}, for experiments
see\cite{Muessel}).  Very recently non-squeezed (non-Gaussian)
entangled states of many atoms \cite{Strobel} were generated in
this way. The number of atoms used in these experiments are of
order of thousands and larger. So far, these numbers and
experimental errors and imperfections are too large, while the
corresponding fidelities too small to detect many body
non-locality. In principle, however, it is possible to perform
these experiments with mesoscopic atom numbers (say $\le 100$),
controlling the atom number to a single atom level (see Ref.
\cite{Hume} for the resonant fluorescence detection of Rb$^{87}$
atoms in a MOT, and Refs. \cite{Wenz,Zuern} for optically trapped
spin 1/2 fermions).

\item{\bf Ultracold trapped ions.} Ultracold trapped ions with internal
pseudo-spin "talk" to each other via phonon excitations, and in
some condition behave as spin chains with long range interaction.
This was originally proposed in Ref. \cite{Wunderlich}, using
inhomogeneous magnetic fields, and in Ref. \cite{Porras},
employing appropriately designed laser-ion interactions. The
pioneering experiments were reported in Refs.
\cite{Schaetz,Monroe2010}. While in the early theory studies
\cite{Porras,Porras2,HaukeNJP,Maik}  spin interactions decaying
with the third power of the distance were considered, it was experimentally
demonstrated  that  management of phonon dispersion allows to
achieve powers between 0.1 and 3  in the 2D arrays of
traps \cite{Bollinger}. Recent state of art represents the  work
on experimental realization of a quantum integer-spin chain with
controllable interactions \cite{Monroe}. We have studied trapped
ion systems in relation to long range $SU(3)$ spin chains and
quantum chaos \cite{Grass}, and trapped-ion quantum simulation of
tunable-range Heisenberg chains \cite{Grass1}. In the latter paper
we demonstrated that significant violation of the Bell
inequalities, discussed in this lecture, is possible even for
the ground states of the models with large, but finite interaction
range. The experimental scheme is presented in Fig. \ref{toby}.

\begin{figure}[h]
\centering
\includegraphics[width=0.5\textwidth]{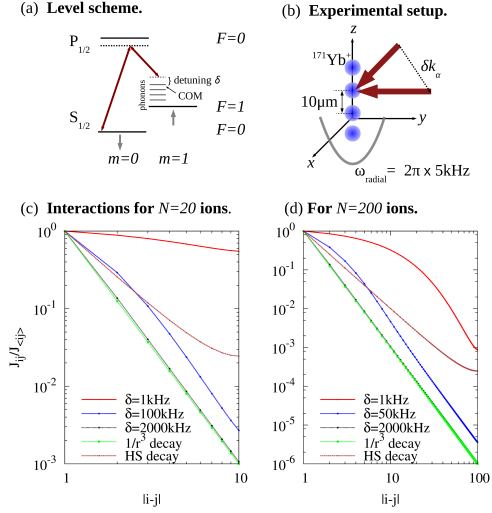}
\caption{\label{toby}
(a,b): Level scheme and setup for a possible implementation of spin-spin interactions in $^{171}$Yb$^+$.
(c,d):
Interaction strengths $J_{ij}$ between one ion in the center and the other
ions, for (c) $N=20$ or (d) $N=200$, and different detunings from
the center-of-mass (COM) mode. We have used the parameters specified in (b). Interactions are compared with the interactions of the Haldane-Shastry (HS) model (brown lines), and with $1/r^3$ interactions (green lines).
}
\end {figure}

\item {\bf Ultracold atoms in nanostructures.} Yet another possibility concerns systems of
ultracold atoms trapped in the vicinity of tapered fibers and optical
crystals (band gap materials). The experimental progress in
coupling of ultracold atomic gases to nanophotonic waveguides,
initiated by Refs. \cite{Nayak,Vetsch,Goban}, is very rapid (cf.
\cite{Atomlight}). Early theoretical studies concerned remarkable
optical properties of these systems (cf.
\cite{Kien,Zoubi,Gong,Gorshkov}).  Ideas  and proposals concerning
realization of long range spin models were developed more
recently, and mainly  in Refs. see \cite{Chang, Chang1, Chang2}.

\item{\bf Cold and ultracold atomic ensembles.} Last, but not least, one should consider cold and ultracold ensembles (for an excellent  review
 see \cite{Hammerer}), in which, by employing quantum Faraday effect, one can reach unprecendented
degrees of squeezing of the total atomic spin (cf.
\cite{Napolitano,Sewell}), and unprecendented degrees of  precision
of quantum magnetometry (cf. \cite{magneto}). Note that in many
concrete realisations the many body Bell inequalities derived in
this paper require precise measurements of the total spin
components, and their quantum fluctuations. Quantum Faraday
effect, or in other words spin polarization spectroscopy, seems to
be  a perfect method to achieve this goal; note that in principle
 it allows also to reach spatial resolution, and/or  to measure spatial Fourier components of the total spin \cite{Isart,Rogers}.

\end{itemize}

\acknowledgments We thank Joe Eberly, Christian Gogolin, Markus
Oberthaler, Julia Stasi\'nska, and Tam\'as V\'ertesi for
enlightening discussions. This work was supported by Spanish
MINECO projects FOQUS, DIQIP CHIST-ERA, and AP2009-1174 FPU PhD
grant. We acknowledge also EU IP SIQS, ERC AdG OSYRIS, ERC CoG
QITBOX, and The John Templeton Foundation. R. A. acknowledges
Spanish MINECO for the Juan de la Cierva scholarship.

\end{document}